\newtheorem{de}{Definition}
\newtheorem{theo}{Theorem}
\newtheorem{prop}{Proposition}
\newtheorem{lem}{Lemma}
\newtheorem{re}{Remark}
\tikzstyle{vertex}=[circle, draw, inner sep=0pt, minimum size=6pt]
\date{\today}
\title{Distributed leader election and computation of local identifiers for programmable matter}
\author[1]{Nicolas Gastineau}
\author[1]{Wahabou Abdou}
\author[1]{Nader Mbarek}
\author[1]{Olivier Togni}
\affil[1]{LE2I FRE2005, CNRS, Arts et Métiers, Université Bourgogne Franche-Comté, F-21000 Dijon, France}
\begin{document}

\maketitle

\begin{abstract}
The context of this paper is programmable matter, which consists of a set of computational elements, called {\em particles}, in an infinite graph. 
The considered infinite graphs are the square, triangular and king grids. Each particle occupies one vertex, can communicate with the adjacent particles, has the same clockwise direction and knows the local positions of neighborhood particles. 
Under these assumptions, we describe a new leader election algorithm affecting a variable to the particles, called the $k$-local identifier, in such a way that particles at close distance have each a different $k$-local identifier. For all the presented algorithms, the particles only need a $O(1)$-memory space.
\end{abstract}

\section{Introduction}

Programmable matter can be seen as modular robots (called modules or particles) able to fix to adjacent modules and send (receive) messages to (from) other modules fixed to the entity. Thus, the different modules form a geometric shape which is a network.
Usually, a module can fix to another module using a finite number of ports (see Figure~\ref{sphere} for an example of spherical modules). Also, the modules know the ports that are in contact with other modules and have a knowledge about the geographic position of their ports. Moreover, the ports are supposed to be homogeneously distributed along the surface of each module. Such assumptions imply that the way how the modules are on a plane can be modeled by a grid. In this paper, we only consider modules on a plane surface, i.e. two dimensional grids.
In this context, the geometric amoebot model~\cite{ZD2014,ZD2015,ZD2015b,ZD2016a,ZD2016b,ZD2017} aims to model the properties of a network for programmable matter.

Distributed algorithms aim to give a theoretical algorithmic framework in order to model the execution of an algorithm that runs on a network of computational elements that can cooperate in order to solve network problems.
In distributed algorithm frameworks, it is often supposed that the different elements of the network do not have a unique identity, i.e., the network is anonymous. In anonymous networks, a natural question is how to perform a leader election, i.e., how to determine a singular element in an anonymous network.
It is well known that for some network structures, the ring for example, there is no deterministic leader election algorithm~\cite{AW2004}.

In 1999, Antoni Mazurkiewicz~\cite{AM1999} has presented a deterministic general algorithm to determine a leader (in the case it is possible to do so). 
In the situation where the elements have access to a random source, then it is also proven that no algorithm can correctly determine a leader in a ring  with any probability $\alpha>0$ \cite{IR1981}. Due to the assumption we make about the ports of the particles in the context of programmable matter (a particle knows the ports which are in contacts with other particles and knows the geographic position of its ports), the leader election problem becomes different than in the classical system. In particular, in the field of programmable matter, there exists a probabilistic algorithm that determine a leader (and in particular for a ring) with probability $1$~\cite{JJD2017}.

Several projects aim to build programmable matter prototypes. One of such projects \cite{BOUR2018,TU2017}, financed by the french National Agency for Research, aims to build cuboctahedral particles able to deform them-selves in order to move.
This project can be split in two phases, one consists in manufacturing the hardware of prototype matters, the second consists in proposing algorithms for programmable matter.
The final goal of this project is to sculpt a shape-memory polymer sheet with programmable matter. 
In the continuity of the algorithm phase of this project \cite{BOUR2018}, we propose algorithms for the self-configuration, i.e., in order to create identifiers and spanning trees.

In the context of programmable matter \cite{BOUR2015,BUT2004,HI2014,NAZ2016,TU2017,YI2007}, it is supposed that a network can contain several millions of modules and that each module has possibly a nano-centimeter size. These two facts lead us to believe that even a $O(\log(n))$-space memory for each module, $n$ being the number of modules, is not technically possible. Also, because of the large number of modules, it can be very challenging and time consuming to implement a unique identity to the modules when they are created. In this context, we suppose that the modules can not store a unique identity, i.e., that the network is anonymous. In this paper we propose deterministic $O(1)$-space memory algorithms to determine a leader in the network and to create $k$-local identifiers of the particles. A $k$-local identifier is a variable affected to each module of the network which is different for every two modules at distance at most $k$.
Note that leader election \cite{JJD2017,DIL2017} plays a significant role in numerous problems of programmable matter.

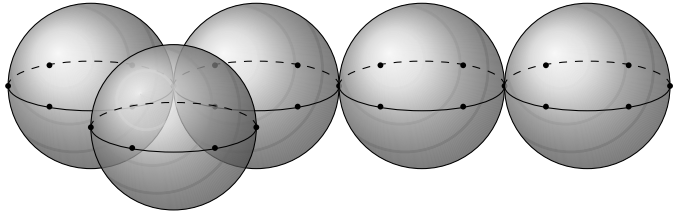
\begin{figure}[t]
\begin{center}
\begin{tikzpicture}[scale=0.55]
  \shade[ball color = gray!40, opacity = 0.4] (0,0) circle (2cm);
  \draw (0,0) circle (2cm);
  \draw (-2,0) arc (180:360:2 and 0.6);
  \draw[dashed] (2,0) arc (0:180:2 and 0.6);
  \fill[fill=black] (2,0) circle (2pt);
  \fill[fill=black] (-2,0) circle (2pt);
  \fill[fill=black] (-1,-0.5) circle (2pt);
  \fill[fill=black] (1,-0.5) circle (2pt);
  \fill[fill=black] (-1,0.5) circle (2pt);
  \fill[fill=black] (1,0.5) circle (2pt);
  \shade[ball color = gray!40, opacity = 0.4] (4,0) circle (2cm);
  \draw (4,0) circle (2cm);
  \draw (2,0) arc (180:360:2 and 0.6);
  \draw[dashed] (6,0) arc (0:180:2 and 0.6);
  \fill[fill=black] (6,0) circle (2pt);
  \fill[fill=black] (-1+4,-0.5) circle (2pt);
  \fill[fill=black] (1+4,-0.5) circle (2pt);
  \fill[fill=black] (-1+4,0.5) circle (2pt);
  \fill[fill=black] (1+4,0.5) circle (2pt);
  \shade[ball color = gray!40, opacity = 0.4] (8,0) circle (2cm);
  \draw (8,0) circle (2cm);
  \draw (6,0) arc (180:360:2 and 0.6);
  \draw[dashed] (10,0) arc (0:180:2 and 0.6);
  \fill[fill=black] (10,0) circle (2pt);
  \fill[fill=black] (-1+8,-0.5) circle (2pt);
  \fill[fill=black] (1+8,-0.5) circle (2pt);
  \fill[fill=black] (-1+8,0.5) circle (2pt);
  \fill[fill=black] (1+8,0.5) circle (2pt);
  \shade[ball color = gray!40, opacity = 0.4] (12,0) circle (2cm);
  \draw (12,0) circle (2cm);
  \draw (10,0) arc (180:360:2 and 0.6);
  \draw[dashed] (14,0) arc (0:180:2 and 0.6);
  \fill[fill=black] (14,0) circle (2pt);
  \fill[fill=black] (-1+12,-0.5) circle (2pt);
  \fill[fill=black] (1+12,-0.5) circle (2pt);
  \fill[fill=black] (-1+12,0.5) circle (2pt);
  \fill[fill=black] (1+12,0.5) circle (2pt);
  
  \shade[ball color = gray!40, opacity = 0.4] (2,-0.5*2) circle (2cm);
  \draw (2,-0.5*2) circle (2cm);
  \draw (0,-0.5*2) arc (180:360:2 and 0.6);
  \draw[dashed] (4,-0.5*2)  arc (0:180:2 and 0.6);
    \fill[fill=black] (1,-1.5) circle (2pt);
  \fill[fill=black] (3,-1.5) circle (2pt);
    \fill[fill=black] (0,-1) circle (2pt);
  \fill[fill=black] (4,-1) circle (2pt);
\end{tikzpicture}
\caption{Five spherical particles forming a simple structure (circle: port of the particles).}
\label{sphere}
\end{center}
\end{figure}

Our contribution is the following: we introduce a leader election algorithm based on local computations and simple to implement. This algorithm works when the structure the particles form has no hole (see Section 3). Also, since the algorithm can be described as a sequence of local computations, its limits (message complexity, required memory-space, etc) are easy to analyze.
We present a distributed algorithm to construct a spanning tree in the context of programmable matter and, also, a distributed algorithm to re-organize the port numbers of the particles.
Finally, we present an algorithm to assign a $k$-local identifier to each particle. In order to compute $k$-local identifiers, we suppose that we have done a leader election before. The $k$-local identifiers are determined using graph theoretical results about the coloring of the $k^{\text{th}}$ power of the grids.
An advantage of the given $k$-local identifiers is that they are really simple to update in case the particles move and, consequently, the structure that the particles form changes.

This paper is organized as follows: in Section 2, we present our algorithmic framework in the context of distributed algorithms for programmable network. In the third section, we present our leader election algorithm. Finally, in Section 4, we present our algorithm to assign $k$-local identifiers to the particles (using the colorings from Appendix~\ref{appendixd}).

\section{Notation, definitions and our programmable matter algorithmic framework}

The geometric amoebot model~\cite{ZD2014,ZD2015,ZD2015b,ZD2016a,ZD2016b,ZD2017} aims to model the computations that can occur in the context of programmable matter. In this paper, we use an algorithmic framework inspired by the geometric amoebot model.
We assume that any structure the different particles can form is a subgraph of an infinite graph $G$.
In this graph, $V(G)$ represents all possible positions the particles can occupy and $E(G)$ represents possible connections between particles. The set $E(G)$ also represents the possible movements from a position to another position (for a particle). We suppose that two particles can bond each other, i.e., can communicate only in the case they are on adjacent positions. The two following paragraphs are dedicated to the notation and definitions we use for graphs.

For a graph $G$, we denote by $V(G)$ the \textit{vertex set} of $G$ and by $E(G)\subseteq V(G)\times V(G)$ the \textit{edge set} of $G$. 
We denote by $d_G(u,v)$, the usual distance between two vertices $u$ and $v$ in $G$. If we consider the distance in a subgraph $H$ of $G$, the distance will be denoted by $d_H(u,v)$. 
The \textit{diameter} of $G$, denoted by $\text{diam}(G)$, is $\max(\{ d_{G}(u,v)|\ u,v\in V(G)\})$. The set $N_G(u)=\{ v\in V(G) |\ uv \in E(G)\}$ is the set of \textit{neighbors} of $u$. By $\Delta(G)$, we denote the \emph{maximum degree} in $G$, i.e., the maximum cardinality of $N_{G}(u)$, for $u\in V(G)$.
Finally, we denote by $G[S]$, for $S\subseteq V(G)$, the subgaph induced by the vertices from $S$ and by $G-S$ the subgraph of $G$ induced by the vertices from $V(G)\setminus S$.

\begin{figure}[t]
\begin{center}
\begin{tikzpicture}[scale=1]
\foreach \x in {-0.5,0,0.5,...,3.5}
{
\draw (\x,0) -- (\x,2);
}
\foreach \y in {0,0.5, ...,2}
{
\draw (-0.5,\y) -- (3.5,\y);
}

\foreach \x in {2,2.5,3,...,6}
{
\draw (\x+2.5,0) -- (\x+2.5,2);
}
\foreach \y in {0,0.5, ...,2}
{
\draw (2+2.5,\y) -- (6+2.5,\y);
}
\foreach \z in {2,2.5,3,3.5,4}
{
\draw (\z+2.5,2) -- (\z+2.5+2,0);
}
\foreach \h in {0.5,1,1.5}
{
\draw (2+2.5,\h) -- (\h+2.5+2,0);
\draw (\h+4+2.5,2) -- (6+2.5,\h);
}

\node at (0,0.5) [circle,draw=black,fill=black,scale=0.5]{};
\node at (1,1.5) [circle,draw=black,fill=black,scale=0.5]{};
\node at (-0.1,0.8) {\tiny{0}};
\node at (-0.3,0.35) {\tiny{3}};
\node at (0.1,0.2) {\tiny{2}};
\node at (0.3,0.6) {\tiny{1}};
\node at (-0.1+1,0.8+1) {\tiny{2}};
\node at (-0.3+1,0.35+1) {\tiny{1}};
\node at (0.1+1,0.2+1) {\tiny{0}};
\node at (0.3+1,0.6+1) {\tiny{3}};

\node at (0+5,0.5) [circle,draw=black,fill=black,scale=0.5]{};
\node at (7,1.5) [circle,draw=black,fill=black,scale=0.5]{};
\node at (-0.1+5,0.8) {\tiny{1}};
\node at (-0.35+5,0.7) {\tiny{0}};
\node at (-0.25+5,0.4) {\tiny{5}};
\node at (0.1+5,0.2) {\tiny{4}};
\node at (0.35+5,0.25) {\tiny{3}};
\node at (0.25+5,0.58) {\tiny{2}};
\node at (-0.1+7,0.8+1) {\tiny{0}};
\node at (-0.35+7,0.7+1) {\tiny{5}};
\node at (-0.25+7,0.4+1) {\tiny{4}};
\node at (0.1+7,0.2+1) {\tiny{3}};
\node at (0.35+7,0.25+1) {\tiny{2}};
\node at (0.25+7,0.58+1) {\tiny{1}};

\node at (6.5-5,-0.2) {(a)};
\node at (6.5,-0.2) {(b)};

\end{tikzpicture}
\end{center}
\begin{center}
\vspace{0.3cm}
\begin{tikzpicture}

\foreach \x in {2,2.5,3,...,6}
{
\draw (\x+7.5,0) -- (\x+7.5,2);
}
\foreach \y in {0,0.5, ...,2}
{
\draw (2+7.5,\y) -- (6+7.5,\y);
}
\foreach \z in {2,2.5,3,3.5,4}
{
\draw (\z+7.5,2) -- (\z+7.5+2,0);
}
\foreach \h in {0.5,1,1.5}
{
\draw (2+7.5,\h) -- (\h+7.5+2,0);
\draw (\h+4+7.5,2) -- (6+7.5,\h);
}

\foreach \z in {4,4.5,5,5.5,6}
{
\draw (\z+7.5,2) -- (\z-2+7.5,0);
}
\foreach \h in {0.5,1,1.5}
{
\draw (6+7.5,\h) -- (-\h+6+7.5,0);
\draw (\h+2+7.5,2) -- (7.5+2,2-\h);
}

\node at (0+10,0.5) [circle,draw=black,fill=black,scale=0.5]{};
\node at (11.5,0.5) [circle,draw=black,fill=black,scale=0.5]{};
\node at (-0.06+10,0.8) {\tiny{1}};
\node at (-0.25+10,0.6) {\tiny{0}};
\node at (-0.3+10,0.41) {\tiny{7}};
\node at (-0.1+10,0.28) {\tiny{6}};
\node at (0.08+10,0.2) {\tiny{5}};
\node at (0.25+10,0.36) {\tiny{4}};
\node at (0.3+10,0.58) {\tiny{3}};
\node at (0.1+10,0.7) {\tiny{2}};

\node at (-0.06+11.5,0.8) {\tiny{4}};
\node at (-0.25+11.5,0.6) {\tiny{3}};
\node at (-0.3+11.5,0.41) {\tiny{2}};
\node at (-0.1+11.5,0.28) {\tiny{1}};
\node at (0.08+11.5,0.2) {\tiny{0}};
\node at (0.25+11.5,0.36) {\tiny{7}};
\node at (0.3+11.5,0.58) {\tiny{6}};
\node at (0.1+11.5,0.7) {\tiny{5}};
\node at (6.5+5,-0.2) {(c)};

\end{tikzpicture}
\end{center}
\caption{Subgraphs of the square (Figure \ref{portfig}.a), triangular (Figure \ref{portfig}.b) and king (Figure \ref{portfig}.c) grids, with the port numbers of two particles.}
\label{portfig}
\end{figure}
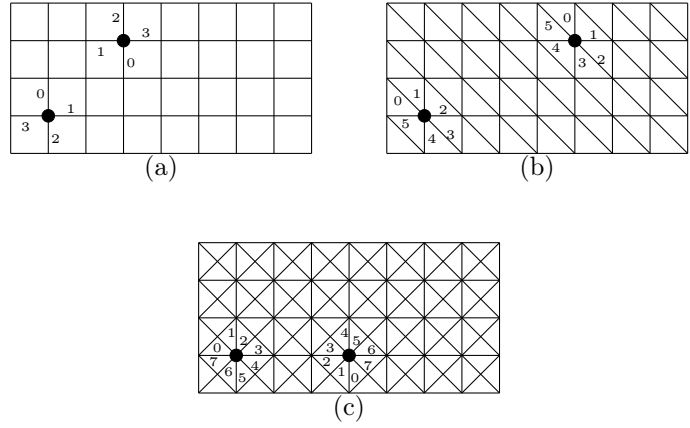

In the remaining part of this paper, the graphs considered will be the infinite \textit{square}, \textit{triangular} and \textit{king} grids. We denote by $\mathscr{S}$ the square grid, by $\mathscr{T}$ the triangular grid and by $\mathscr{K}$ the king grid.
A subgraph of each of these three infinite graphs is represented in Figure \ref{portfig}.
Moreover, we suppose that these three grids are represented on a plane as in Figure \ref{portfig}.
For these grids, the considered vertex set is $\{(i,j)|\ i,j \in \mathbb{Z}\}$ and the edge sets are the following:
\begin{itemize}
\item $E(\mathscr{S})=\{(i,j)(i\pm 1,j)|\ i,j \in \mathbb{Z} \}\cup \{(i,j)(i,j\pm 1)|\ i,j \in \mathbb{Z} \}$;
\item $E(\mathscr{T})=E(\mathcal{S})\cup \{(i,j)(i+ 1,j-1)|\ i,j \in \mathbb{Z} \}\cup \{(i,j)(i-1,j+ 1)|\ i,j \in \mathbb{Z} \}$;
\item $E(\mathscr{K})=E(\mathcal{T})\cup \{(i,j)(i+ 1,j+1)|\ i,j \in \mathbb{Z} \}\cup \{(i,j)(i-1,j- 1)|\ i,j \in \mathbb{Z} \}$. 
\end{itemize}
We also remind the distance between two vertices $(i,j)$ and $(i',j')$ in the three different grids: 
\begin{itemize}
\item $d_{\mathscr{S}}((i,j),(i',j'))=|i-i'|+|j-j'|$;
\item $d_{\mathscr{T}}((i,j),(i',j'))= \left\{
    \begin{array}{ll}
        \max(|i-i'|,|j-j'|), &\text{if } (i\ge i' \land j \le j' )\lor ( i\le i' \land j \ge j');\\
        |i-i'|+|j-j'|, &\text{otherwise;}
    \end{array}
\right.
$
\item $d_{\mathscr{K}}((i,j),(i',j'))=\max(|i-i'|,|j-j'|)$.
\end{itemize}
Note that there is a way to draw the triangular grid in which each triangle is equilateral. However, we prefer to draw it as a subgraph of the king grid (see Figure \ref{portfig}) in order to have illustrations for which the vertex set $\{(i,j)|\ i,j \in \mathbb{Z}\}$ corresponds to the position of the vertices in the plane. In both representation, the notion of distance coincide but is easier to observe in our chosen representation.
However, note that the representation of the triangular grid in which each triangle is equilateral corresponds to the optimal way to pack unit disks in the plane (the position of the vertices in this representation corresponds to the center of the unit disk and an edge represents a contact between two disks).

We also denote by $i\pmod{p}$ or $i_{\pmod p}$, depending on the context, the integer $j$ such that $j\equiv i \pmod{p}$ and $0\le j <p$. The remaining part of this subsection is dedicated to our programmable matter algorithmic framework.

We give the following properties about the particles and vertices of the graph:
\begin{itemize}
\item each particle occupies a single vertex and each vertex is occupied by at most one particle;
\item the subgraph induced by the occupied vertices is supposed to be connected.
\end{itemize} 
The subgraph induced by the occupied vertices of $V(G)$ is called the \textit{particle graph} and is denoted by $P$.
The vertex occupied by a particle $p$ is denoted by $s(p)$. 
For a particle $p$, $N_G(p)=\{u\in V(G)|\  u\in N_G(s(p))  \}$. 
The \textit{ports} of a particle are the endpoints of communication. Each particle has $\Delta(G)$ ports in a regular grid $G$ ($\Delta(G)=4$ for $G=\mathscr{S}$, $\Delta(G)=6$ for $G=\mathscr{T}$ and $\Delta(G)=8$ for $G=\mathscr{K}$). The ports of a particle occupying a vertex $u$ are represented by the edges incident with $u$. 
An edge between two vertices represents a possible communication between two particles $p_{1}$ and $p_{2}$ occupying these two vertices using each one a different port. 
A particle has the following properties:
\begin{itemize}
\item each particle is anonymous, i.e., it does not have an identifier;
\item each particle has a collection of ports, each labeled by a different integer from $\{0,\ldots,\Delta(G)-1\}$;
\item the port numbers are given as a function of the position of the edges on a plane representation of the grids (see Figure~\ref{portfig});
\item each particle knows the labels of the ports that can communicate with particles from the neighborhood;
\item each particle knows the state of the neighbors.
\end{itemize}

In our algorithmic framework, we suppose that the particles have their ports labeled following the same clockwise order. Thus, consecutive port numbers correspond to consecutive edges around a vertex (as in the representation on the plane from Figure~\ref{portfig}).
Note that the particles do not have the same notion of orientation, i.e., there is possibly not a unique label for ports that correspond to edges going in the same cardinal direction.
In the presented algorithms, the state of a particle will contain a variable corresponding to the status of the particle in the leader election algorithm and the information regarding its parents and childs for a constructed spanning tree.

The proposed algorithms in our algorithmic framework are results of successive local computations \cite{BAU2002,ROS1972}.
In particular, the first presented leader election algorithm from Section 3 can be described by a graph relabeling system \cite{BAU2002} which is a local computation system.
In this paper, the correct execution of the different algorithms is only guaranteed if the algorithms are ran in the order depicted in  Figure \ref{chapdepency}.

\begin{figure}[t]
\begin{center}
\begin{tikzpicture}[scale=0.5]
\node at (0,0){\fbox{\begin{minipage}{2.2cm} Leader election (Algorithm~\ref{alg1}) \end{minipage}}};
\node at (6,0){\fbox{\begin{minipage}{2.2cm} Spanning tree (Algorithm~\ref{alg2}) \end{minipage}}};
\node at (12.3,0){\fbox{\begin{minipage}{2.5cm} Port renumbering (Algorithm~\ref{alg3}) \end{minipage}}};
\node at (18.8,0){\fbox{\begin{minipage}{2.5cm} $k$-local identifiers (Algorithm~\ref{alg4}) \end{minipage}}};
\draw[->, thick] (2.4,0) -- (3.4,0);
\draw[->, thick] (8.4,0) -- (9.4,0);
\draw[->, thick] (15,0) -- (16,0);
\node at (18,-2.4){Graph coloring patterns};
\draw[->, thick] (18,-2) -- (18,-1);
\end{tikzpicture}
\end{center}
\caption{An illustration of the algorithm dependency (arrow between algorithms/results: dependency of one algorithm to another algorithm/result).}
\label{chapdepency}
\end{figure}
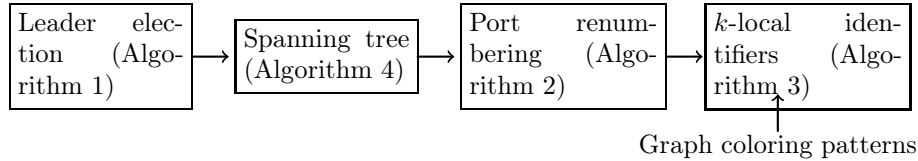

We suppose the following:

\begin{itemize}
\item each particle contains the same program and begins in the same state;
\item the computation process is represented by successive local computations;
\item no local computation occurs simultaneously on two particles at distance at most 2;
\item during a local computation, a particle can perform a bounded number of computations and can send messages to its neighbors;
\item a \emph{round} is a sequence of successive local computations for which each particle does at least one local  computation;
\item an algorithm finishes in $k$ rounds if after any $k$ successive rounds the algorithm is finished.
\end{itemize}
Note that the concept of rounds is used to bound the running time of the algorithms. 
In our algorithm framework we suppose that no two particles at distance at most $2$ perform computations simultaneously in order to simplify the presentation of our results. However, this supposition can be removed by implementing, for example, a probabilistic leader election algorithm on the vertices at distance at most $2$ of one of the two vertices, i.e., by computing a random value on the vertices at distance $2$ and doing the local computation following the increasing order of the values.
In order to compute the running time of an algorithm in case of a specific programmable matter prototype, the complexity of the algorithm should be computed using the required number of rounds and the required running time in order to avoid that two particles at distance at most $2$ perform computations simultaneously.

\section{Leader election}
In this section we present a new leader election algorithm.
This algorithm is very easy to implement but requires that the particle graph has a specific structure. In this algorithm, the required memory space is constant, the messages have constant size, the required computation power of the particle has been optimized and the required number of rounds is less than $2n$ ($n$ being the number of particles).

A \textit{hole} in a subgraph $G'$ of a graph $G$ among the three grids is a subgraph $H$ of $G$ satisfying three properties:
\begin{enumerate}
\item[i)] $V(H)$ is finite, $H$ is connected and $|V(H)|\ge 1$;
\item[ii)] $V(H)\cap V(G')=\emptyset$;
\item[iii)] every vertex $u\in V(H)$ satisfies $N_G(u)\subseteq V(H)\cup V(G')$.
\end{enumerate}
Less formally, a subgraph $G'$ of one of the three grids contains a hole if there is a finite space only containing vertices from $V(G)\setminus V(G')$ which are surrounded by vertices of $G'$.
A hole containing three vertices is illustrated in the left part of Figure \ref{holefig}. We call $G'$ \emph{hole-free}, when $G'$ has no holes.

If the particle graph $P$ on $G$ is hole-free, then every particle $p$ which satisfies $|N_G(p)\cap V(P))|<\Delta(G)$ is at the geographical border of the shape of $P$. Moreover, we call the set of particles $p$ which satisfy $|N_G(p)\cap V(P))|<\Delta(G)$ and such that the vertices $N_G(p)- V(P)$ are not all in a hole of $P$, the \emph{border} of $P$. The right part of Figure \ref{holefig} illustrates the border of $P$.

In addition, for a particle $p$ occupying a vertex $(i,j)$ of the square grid, the four vertices $(i+1,j+1)$, $(i-1,j+1)$, $(i+1,j-1)$ and $(i-1,j-1)$ are the \textit{corners} of $p$ and the set of corners is denoted by $C(p)$. The \emph{extended neighborhood} of a particle $p$, denoted by $M_G(p)$, is the set $N_{G}(p)$ if $G$ is the triangular grid or king grid or the set $N_{G}(p)\cup C(p)$ if $G$ is the square grid. Note that we define the extended neighborhood differently for the square grid in order to be able to present a generic algorithm (Algorithm \ref{alg1}) that works for all the three grids.

We give the following definition of $S$-contractible particle (see Figure \ref{figstmac}) that will be used in our leader election algorithm.

\begin{figure}[t]
\begin{center}
\begin{tikzpicture}[scale=1]

\foreach \x in {2,2.5,3,...,6}
{
\draw (\x+2.5,0) -- (\x+2.5,2);
}
\foreach \y in {0,0.5, ...,2}
{
\draw (2+2.5,\y) -- (6+2.5,\y);
}
\foreach \z in {2,2.5,3,3.5,4}
{
\draw (\z+2.5,2) -- (\z+2.5+2,0);
}
\foreach \h in {0.5,1,1.5}
{
\draw (2+2.5,\h) -- (\h+2.5+2,0);
\draw (\h+4+2.5,2) -- (6+2.5,\h);
}

\foreach \x in {2,2.5,3,...,6}
{
\draw (\x+2.5+5,0) -- (\x+2.5+5,2);
}
\foreach \y in {0,0.5, ...,2}
{
\draw (2+2.5+5,\y) -- (6+2.5+5,\y);
}
\foreach \z in {2,2.5,3,3.5,4}
{
\draw (\z+2.5+5,2) -- (\z+2.5+2+5,0);
}
\foreach \h in {0.5,1,1.5}
{
\draw (2+2.5+5,\h) -- (\h+2.5+2+5,0);
\draw (\h+4+2.5+5,2) -- (6+2.5+5,\h);
}

\node at (6,0.5) [circle,draw=black,fill=black,scale=0.5]{};
\node at (5.5,1) [circle,draw=black,fill=black,scale=0.5]{};
\node at (5.5,1.5) [circle,draw=black,fill=black,scale=0.5]{};
\node at (6,1.5) [circle,draw=black,fill=black,scale=0.5]{};
\node at (6.5,0) [circle,draw=black,fill=black,scale=0.5]{};
\node at (7,0) [circle,draw=black,fill=black,scale=0.5]{};
\node at (7,0.5) [circle,draw=black,fill=black,scale=0.5]{};
\node at (7,1) [circle,draw=black,fill=black,scale=0.5]{};
\node at (6.5,1.5) [circle,draw=black,fill=black,scale=0.5]{};
\node at (7,0.5) [circle,draw=black,fill=black,scale=0.5]{};
\node at (7,1.5) [circle,draw=black,fill=black,scale=0.5]{};
\node at (7,2) [circle,draw=black,fill=black,scale=0.5]{};
\node at (7.5,0.5) [circle,draw=black,fill=black,scale=0.5]{};
\node at (7.5,1.5) [circle,draw=black,fill=black,scale=0.5]{};
\node at (7.5,1) [circle,draw=black,fill=black,scale=0.5]{};

\node at (13.5-2,0) [regular polygon, regular polygon sides=4,draw=black,fill=red,scale=0.5]{};
\node at (13-2,0.5) [regular polygon, regular polygon sides=4,draw=black,fill=red,scale=0.5]{};
\node at (13.5-2,0.5) [regular polygon, regular polygon sides=4,draw=black,fill=red,scale=0.5]{};
\node at (14-2,0.5) [regular polygon, regular polygon sides=4,draw=black,fill=red,scale=0.5]{};
\node at (12.5-2,1) [regular polygon, regular polygon sides=4,draw=black,fill=red,scale=0.5]{};
\node at (13-2,1) [circle,draw=black,fill=black,scale=0.5]{};
\node at (13.5-2,1) [circle,draw=black,fill=black,scale=0.5]{};
\node at (14-2,1) [regular polygon, regular polygon sides=4,draw=black,fill=red,scale=0.5]{};
\node at (14.5-2,1) [regular polygon, regular polygon sides=4,draw=black,fill=red,scale=0.5]{};
\node at (12.5-2,1.5) [regular polygon, regular polygon sides=4,draw=black,fill=red,scale=0.5]{};
\node at (13.5-2,1.5) [circle,draw=black,fill=black,scale=0.5]{};
\node at (13-2,1.5) [regular polygon, regular polygon sides=4,draw=black,fill=red,scale=0.5]{};
\node at (14-2,1.5) [circle,draw=black,fill=black,scale=0.5]{};
\node at (14.5-2,1.5) [regular polygon, regular polygon sides=4,draw=black,fill=red,scale=0.5]{};
\node at (15-2,1.5) [regular polygon, regular polygon sides=4,draw=black,fill=red,scale=0.5]{};
\node at (15.5-2,1.5) [regular polygon, regular polygon sides=4,draw=black,fill=red,scale=0.5]{};
\node at (13-2,2) [regular polygon, regular polygon sides=4,draw=black,fill=red,scale=0.5]{};
\node at (13.5-2,2) [regular polygon, regular polygon sides=4,draw=black,fill=red,scale=0.5]{};
\node at (14-2,2) [regular polygon, regular polygon sides=4,draw=black,fill=red,scale=0.5]{};
\node at (14.5-2,2) [regular polygon, regular polygon sides=4,draw=black,fill=red,scale=0.5]{};
\end{tikzpicture}
\end{center}
\caption{A hole in $P$ (on the left) and the border of $P$ in the case $P$ is hole-free (on the right; square: particle on the border of $P$).}
\label{holefig}
\end{figure}
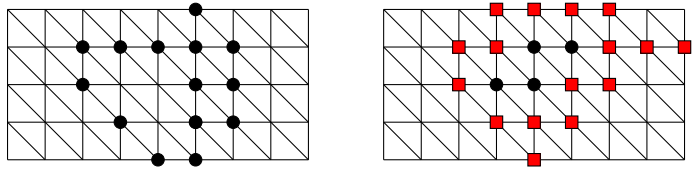
\begin{de}
Let $G$ be an infinite grid among $\mathscr{S}$, $\mathscr{T}$ and $\mathscr{K}$ and let $S\subseteq V(P)$, for $P$ the particle graph on $G$.
A particle $p$ is said to be \textit{$S$-contractible} if it satisfies the following properties:
\begin{enumerate}
\item[I)] $G[M_{G}(p)\cap S]$ is connected;
\item[II)] $|N_{G}(p)\cap S|<\Delta(G)$, i.e., there exists a neighbor of $p$ in $G$ which is not occupied by a particle from $S$.
\end{enumerate}
\end{de}

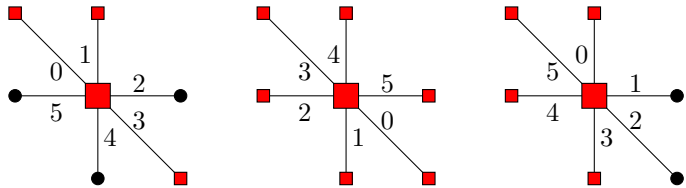
\begin{figure}
\begin{center}
\begin{tikzpicture}[scale=1.1]
\draw (0,-1) -- (0,1);
\draw (-1,0) -- (1,0);
\draw (1,-1) -- (-1,1);

\draw (0+3,-1) -- (0+3,1);
\draw (-1+3,0) -- (1+3,0);
\draw (1+3,-1) -- (-1+3,1);

\draw (0+6,-1) -- (0+6,1);
\draw (-1+6,0) -- (1+6,0);
\draw (1+6,-1) -- (-1+6,1);

\node at (0,0)  [regular polygon, regular polygon sides=4,draw=black,fill=red,scale=1]{};
\node at (0,1)  [regular polygon, regular polygon sides=4,draw=black,fill=red,scale=0.5]{};
\node at (-1,1) [regular polygon, regular polygon sides=4,draw=black,fill=red,scale=0.5]{};
\node at (-1,0) [circle,draw=black,fill=black,scale=0.5]{};
\node at (1,0) [circle,draw=black,fill=black,scale=0.5]{};
\node at (1,-1)  [regular polygon, regular polygon sides=4,draw=black,fill=red,scale=0.5]{};
\node at (0,-1) [circle,draw=black,fill=black,scale=0.5]{};
\node at (-0.5,0.3){$0$};
\node at (-0.15,0.5){$1$};
\node at (0.5,0.15){$2$};
\node at (0.5,-0.3){$3$};
\node at (0.15,-0.5){$4$};
\node at (-0.5,-0.2){$5$};

\node at (0+3,0)  [regular polygon, regular polygon sides=4,draw=black,fill=red,scale=1]{};
\node at (0+3,1)  [regular polygon, regular polygon sides=4,draw=black,fill=red,scale=0.5]{};
\node at (-1+3,1) [regular polygon, regular polygon sides=4,draw=black,fill=red,scale=0.5]{};
\node at (-1+3,0) [regular polygon, regular polygon sides=4,draw=black,fill=red,scale=0.5]{};
\node at (1+3,0) [regular polygon, regular polygon sides=4,draw=black,fill=red,scale=0.5]{};
\node at (1+3,-1)  [regular polygon, regular polygon sides=4,draw=black,fill=red,scale=0.5]{};
\node at (0+3,-1) [regular polygon, regular polygon sides=4,draw=black,fill=red,scale=0.5]{};
\node at (-0.5+3,0.3){$3$};
\node at (-0.15+3,0.5){$4$};
\node at (0.5+3,0.15){$5$};
\node at (0.5+3,-0.3){$0$};
\node at (0.15+3,-0.5){$1$};
\node at (-0.5+3,-0.2){$2$};

\node at (0+6,0)  [regular polygon, regular polygon sides=4,draw=black,fill=red,scale=1]{};
\node at (0+6,1)  [regular polygon, regular polygon sides=4,draw=black,fill=red,scale=0.5]{};
\node at (-1+6,1) [regular polygon, regular polygon sides=4,draw=black,fill=red,scale=0.5]{};
\node at (-1+6,0) [regular polygon, regular polygon sides=4,draw=black,fill=red,scale=0.5]{};
\node at (1+6,0) [circle,draw=black,fill=black,scale=0.5]{};
\node at (1+6,-1)  [circle,draw=black,fill=black,scale=0.5]{};
\node at (0+6,-1) [regular polygon, regular polygon sides=4,draw=black,fill=red,scale=0.5]{};
\node at (-0.5+6,0.3){$5$};
\node at (-0.15+6,0.5){$0$};
\node at (0.5+6,0.15){$1$};
\node at (0.5+6,-0.3){$2$};
\node at (0.15+6,-0.5){$3$};
\node at (-0.5+6,-0.2){$4$};
\end{tikzpicture}
\end{center}
\caption{Two non $S$-contractible particles (at the center of the left and the middle drawing) and an $S$-contractible particle (at the center of the right drawing) in the triangular grid (square: particle in $S$; circle: particle not in $S$).}
\label{figstmac}
\end{figure}

A particle $p$ is an \emph{articulation} of a connected subgraph $G'$ of one of the three grids if $G'-\{s(p)\}$ is not connected.
Derakhshandeh et al. \cite{ZD2015b} proposed a randomized leader election algorithm in the geometric amoebot model in the case there is no particle which is an articulation.
Our proposed leader election algorithm (Algorithm \ref{alg1}) works even if $V(P)$ contains a particle $p$ which is an articulation.
However, in contrast with the leader election algorithm from Derakhshandeh et al. \cite{ZD2015b}, Algorithm \ref{alg1} does not work if $P$ has holes. In the remaining part of this paper, Algorithm \ref{alg1} is called the $S$-contraction algorithm.

Recently, Daymude et al. \cite{JJD2017} have improved the algorithm from Derakhshandeh et al. \cite{ZD2015b} in order that it works when $V(P)$ contains an articulation. However, it remains challenging to implement it.

Also, very recently, Di Luna et al. \cite{DIL2017} have introduced a leader election algorithm called consumption algorithm. The consumption and the $S$-contraction algorithms both consist in successively removing the candidacy of the particles on the border of $P$.
However, one can easily notice that, in our algorithm, we possibly remove the candidacy of particles having four or five neighbors (which is not considered in the consumption algorithm).  Also, the consumption algorithm  does not work on square and king grids and the considered theoretical frameworks for the two algorithms are different.

In the $S$-contraction algorithm (Algorithm \ref{alg1}), the particles can be in three different states: \textbf{C} (candidate), \textbf{N} (not elected) and \textbf{L} (leader). We suppose that every particle begins in the state \textbf{C}.

%
%
%

Let $S$ be the particle in state \textbf{C}. Algorithm \ref{alg1} consists in removing from $S$ the particles which are both on the border of $G[S]$ and not articulations of $G[S]$. 
An example of the execution of Algorithm \ref{alg1} is illustrated by Figure~\ref{proccess}. Note that, depending on the order in which the local computations occur, the result of the execution of the algorithm could be different. 
For example, between the configuration of Figure~\ref{proccess}.c and that of Figure~\ref{proccess}.d, we suppose that the local computations occur in this order: first a local computation occurs for the bottom left particle, second it occurs for the upper left particle, third it occurs for the upper right particle and fourth it occurs for the last particle (we only consider the particles which are in state \textbf{C}).

\begin{algorithm}
\caption{The $S$-contraction algorithm for a particle $p$ and $S$ the set of particles in state \textbf{C}.} 
\label{alg1}
\begin{algorithmic} 
\State \textbf{Case 1: } State \textbf{C}.
\If {the particle is $S$-contractible}
    \If {the particle has no neighbor in $S$}
        \State set the state to \textbf{L}.
    \Else
        \State set the state to \textbf{N}.
    \EndIf
\Else
        \State  stay in state \textbf{C}.
\EndIf
\State \textbf{Case 2: } States \textbf{L} or \textbf{N}. \\ Perform no further actions. 
\end{algorithmic}
\end{algorithm}

\begin{figure}[t]
\begin{center}
\begin{tikzpicture}[scale=1]

\foreach \x in {2,2.5,3,...,6}
{
\draw (\x+2.5,0) -- (\x+2.5,2);
}
\foreach \y in {0,0.5, ...,2}
{
\draw (2+2.5,\y) -- (6+2.5,\y);
}
\foreach \z in {2,2.5,3,3.5,4}
{
\draw (\z+2.5,2) -- (\z+2.5+2,0);
}
\foreach \h in {0.5,1,1.5}
{
\draw (2+2.5,\h) -- (\h+2.5+2,0);
\draw (\h+4+2.5,2) -- (6+2.5,\h);
}

\foreach \x in {2,2.5,3,...,6}
{
\draw (\x+2.5+5,0) -- (\x+2.5+5,2);
}
\foreach \y in {0,0.5, ...,2}
{
\draw (2+2.5+5,\y) -- (6+2.5+5,\y);
}
\foreach \z in {2,2.5,3,3.5,4}
{
\draw (\z+2.5+5,2) -- (\z+2.5+2+5,0);
}
\foreach \h in {0.5,1,1.5}
{
\draw (2+2.5+5,\h) -- (\h+2.5+2+5,0);
\draw (\h+4+2.5+5,2) -- (6+2.5+5,\h);
}

\node at (5,0.5) [regular polygon, regular polygon sides=4,draw=black,fill=red,scale=0.5]{};
\node at (5.5,0.5) [circle,draw=black,fill=black,scale=0.5]{};
\node at (6,0.5) [circle,draw=black,fill=black,scale=0.5]{};
\node at (6,1) [circle,draw=black,fill=black,scale=0.5]{};
\node at (6.5,0.5) [regular polygon, regular polygon sides=4,draw=black,fill=red,scale=0.5]{};
\node at (6.5,1) [regular polygon, regular polygon sides=4,draw=black,fill=red,scale=0.5]{};
\node at (7,0.5) [regular polygon, regular polygon sides=4,draw=black,fill=red,scale=0.5]{};
\node at (7,1) [regular polygon, regular polygon sides=4,draw=black,fill=red,scale=0.5]{};
\node at (5.5,1.5) [circle,draw=black,fill=black,scale=0.5]{};
\node at (5,2) [regular polygon, regular polygon sides=4,draw=black,fill=red,scale=0.5]{};
\node at (7.5,1.5) [regular polygon, regular polygon sides=4,draw=black,fill=red,scale=0.5]{};
\node at (7.5,0.5) [regular polygon, regular polygon sides=4,draw=black,fill=red,scale=0.5]{};
\node at (7.5,1) [circle,draw=black,fill=black,scale=0.5]{};
\node at (8,0.5) [regular polygon, regular polygon sides=4,draw=black,fill=red,scale=0.5]{};

\node at (5+5,0.5) [regular polygon, regular polygon sides=3,draw=black,fill=green,scale=0.4]{};
\node at (5.5+5,0.5) [regular polygon, regular polygon sides=4,draw=black,fill=red,scale=0.5]{};
\node at (6+5,0.5) [circle,draw=black,fill=black,scale=0.5]{};
\node at (6+5,1) [circle,draw=black,fill=black,scale=0.5]{};
\node at (6.5+5,0.5) [regular polygon, regular polygon sides=3,draw=black,fill=green,scale=0.4]{};
\node at (6.5+5,1)  [circle,draw=black,fill=black,scale=0.5]{};
\node at (7+5,0.5) [regular polygon, regular polygon sides=3,draw=black,fill=green,scale=0.4]{};
\node at (7+5,1)  [circle,draw=black,fill=black,scale=0.5]{};
\node at (5.5+5,1.5) [regular polygon, regular polygon sides=4,draw=black,fill=red,scale=0.5]{};
\node at (5+5,2) [regular polygon, regular polygon sides=3,draw=black,fill=green,scale=0.4]{};
\node at (7.5+5,1.5) [regular polygon, regular polygon sides=3,draw=black,fill=green,scale=0.4]{};
\node at (7.5+5,0.5) [regular polygon, regular polygon sides=3,draw=black,fill=green,scale=0.4]{};
\node at (7.5+5,1) [regular polygon, regular polygon sides=4,draw=black,fill=red,scale=0.5]{};
\node at (8+5,0.5) [regular polygon, regular polygon sides=3,draw=black,fill=green,scale=0.4]{};

\node at (6.5,-0.2) {(a)};
\node at (6.5+5,-0.2) {(b)};

\end{tikzpicture}
\end{center}
\begin{center}
\begin{tikzpicture}[scale=1]

\foreach \x in {2,2.5,3,...,6}
{
\draw (\x+2.5,0) -- (\x+2.5,2);
}
\foreach \y in {0,0.5, ...,2}
{
\draw (2+2.5,\y) -- (6+2.5,\y);
}
\foreach \z in {2,2.5,3,3.5,4}
{
\draw (\z+2.5,2) -- (\z+2.5+2,0);
}
\foreach \h in {0.5,1,1.5}
{
\draw (2+2.5,\h) -- (\h+2.5+2,0);
\draw (\h+4+2.5,2) -- (6+2.5,\h);
}

\foreach \x in {2,2.5,3,...,6}
{
\draw (\x+2.5+5,0) -- (\x+2.5+5,2);
}
\foreach \y in {0,0.5, ...,2}
{
\draw (2+2.5+5,\y) -- (6+2.5+5,\y);
}
\foreach \z in {2,2.5,3,3.5,4}
{
\draw (\z+2.5+5,2) -- (\z+2.5+2+5,0);
}
\foreach \h in {0.5,1,1.5}
{
\draw (2+2.5+5,\h) -- (\h+2.5+2+5,0);
\draw (\h+4+2.5+5,2) -- (6+2.5+5,\h);
}

\node at (5,0.5) [regular polygon, regular polygon sides=3,draw=black,fill=green,scale=0.4]{};
\node at (5.5,0.5) [regular polygon, regular polygon sides=3,draw=black,fill=green,scale=0.4]{};
\node at (6,0.5) [regular polygon, regular polygon sides=4,draw=black,fill=red,scale=0.5]{};
\node at (6,1) [circle,draw=black,fill=black,scale=0.5]{};
\node at (6.5,0.5) [regular polygon, regular polygon sides=3,draw=black,fill=green,scale=0.4]{};
\node at (6.5,1)  [circle,draw=black,fill=black,scale=0.5]{};
\node at (7,0.5) [regular polygon, regular polygon sides=3,draw=black,fill=green,scale=0.4]{};
\node at (7,1)  [regular polygon, regular polygon sides=4,draw=black,fill=red,scale=0.5]{};
\node at (5.5,1.5) [regular polygon, regular polygon sides=3,draw=black,fill=green,scale=0.4]{};
\node at (5,2) [regular polygon, regular polygon sides=3,draw=black,fill=green,scale=0.4]{};
\node at (7.5,1.5) [regular polygon, regular polygon sides=3,draw=black,fill=green,scale=0.4]{};
\node at (7.5,0.5) [regular polygon, regular polygon sides=3,draw=black,fill=green,scale=0.4]{};
\node at (7.5,1) [regular polygon, regular polygon sides=3,draw=black,fill=green,scale=0.4]{};
\node at (8,0.5) [regular polygon, regular polygon sides=3,draw=black,fill=green,scale=0.4]{};

\node at (5+5,0.5) [regular polygon, regular polygon sides=3,draw=black,fill=green,scale=0.4]{};
\node at (5.5+5,0.5) [regular polygon, regular polygon sides=3,draw=black,fill=green,scale=0.4]{};
\node at (6+5,0.5) [regular polygon, regular polygon sides=3,draw=black,fill=green,scale=0.4]{};
\node at (6+5,1) [regular polygon, regular polygon sides=3,draw=black,fill=green,scale=0.4]{};
\node at (6+5.5,0.5) [regular polygon, regular polygon sides=3,draw=black,fill=green,scale=0.4]{};
\node at (6.5+5,1)  [regular polygon, regular polygon sides=5,draw=black,fill=blue,scale=0.5]{};
\node at (7+5,0.5) [regular polygon, regular polygon sides=3,draw=black,fill=green,scale=0.4]{};
\node at (7+5,1)  [regular polygon, regular polygon sides=3,draw=black,fill=green,scale=0.4]{};
\node at (5.5+5,1.5) [regular polygon, regular polygon sides=3,draw=black,fill=green,scale=0.4]{};
\node at (5+5,2) [regular polygon, regular polygon sides=3,draw=black,fill=green,scale=0.4]{};
\node at (7.5+5,1.5) [regular polygon, regular polygon sides=3,draw=black,fill=green,scale=0.4]{};
\node at (7.5+5,0.5) [regular polygon, regular polygon sides=3,draw=black,fill=green,scale=0.4]{};
\node at (7.5+5,1) [regular polygon, regular polygon sides=3,draw=black,fill=green,scale=0.4]{};
\node at (8+5,0.5) [regular polygon, regular polygon sides=3,draw=black,fill=green,scale=0.4]{};

\node at (6.5,-0.2) {(c)};
\node at (6.5+5,-0.2) {(d)};
\end{tikzpicture}
\end{center}
\caption{An example of the execution of $S$-contraction algorithm after one round (Figure~\ref{proccess}.a), two rounds (Figure~\ref{proccess}.b), three rounds (Figure~\ref{proccess}.c) and after four rounds (Figure~\ref{proccess}.d; circle: non $S$-contractible particle; square: $S$-contractible particle; triangle: particle in state \textbf{N}; pentagon: particle in state \textbf{L}; $S$ being the set of particles in state \textbf{C}).}
\label{proccess}
\end{figure}
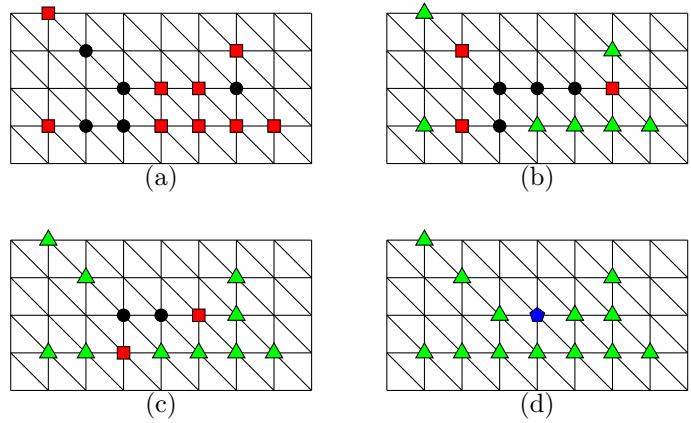

\begin{theo}\label{theobavure}
Let $S$ be the set of particles in state \textbf{C} and $P$ be the particle graph on $G$.
If $P$ is hole-free, then at the end of the execution of the $S$-contraction algorithm, there will be exactly one particle in the state $L$.
\end{theo}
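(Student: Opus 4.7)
The plan is to prove the theorem in three stages: an invariant on $S$, a progress lemma, and a final-state analysis. For the invariant, I would show that throughout the execution $G[S]$ remains connected and hole-free. Connectivity is preserved because condition (I) for $S$-contractibility, namely that $G[M_G(p)\cap S]$ is connected, implies that $p$ is not an articulation of $G[S]$: any $G[S]$-path through $s(p)$ enters from some $u\in N_G(p)\cap S$ and exits to some $v\in N_G(p)\cap S$, both lying in the connected set $M_G(p)\cap S$, so the path can be rerouted within $G[S\setminus\{s(p)\}]$. Hole-freeness is preserved because condition (II) supplies an exterior neighbor of $p$, which by hole-freeness of the previous $G[S]$ belongs to the unique infinite component of $V(G)\setminus S$; moving $s(p)$ from $S$ into the exterior merely extends this infinite component and cannot create a new finite component.

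The central step is the progress lemma: whenever $|S|\ge 2$, some particle of $S$ is $S$-contractible. The intended argument is geometric: a finite, connected, hole-free $S$ admits a well-defined boundary cycle in the plane embedding of the grid, and a discrete Gauss--Bonnet-style accounting forces the total turning along this cycle to equal $2\pi$, so some particle $p$ sits at a convex corner where its exterior neighbors in $M_G(p)$ cover a large enough arc that its $S$-neighbors in $M_G(p)$ form a contiguous, hence connected, arc. I would carry this out case by case for the three grids; for the square grid the extended neighborhood $M_G(p)=N_G(p)\cup C(p)$ (adjoining the four diagonal corners) is defined precisely so that convex boundary particles satisfy condition (I) uniformly with the triangular and king cases. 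A concrete candidate is to pick $p$ as a geometric extremum of $S$---for example a lexicographically smallest occupied position---and to analyze the finitely many local configurations of its $M_G$-neighborhood compatible with connectedness and hole-freeness, passing to a neighbor along the boundary if the first choice fails.

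Given the invariant and the progress lemma, termination with $|S|=1$ follows because $|S|$ strictly decreases each round. Any particle removed while $|S|\ge 2$ must have an $S$-neighbor, for otherwise it would already be disconnected from the rest of $S$, contradicting the invariant; so by the algorithm it transitions to state \textbf{N}. When $|S|=1$ the sole remaining particle $p$ has $G[M_G(p)\cap S]=G[\{s(p)\}]$, which is trivially connected, and $|N_G(p)\cap S|=0<\Delta(G)$, so $p$ is $S$-contractible; having no $S$-neighbor, it sets its state to \textbf{L}. Hence exactly one particle ends in state \textbf{L}. The main obstacle is the progress lemma, where one must rule out configurations with narrow bridges or awkward local arrangements that could leave no $S$-contractible particle; this is where the precise definition of $M_G$ and the hole-free hypothesis do the essential work.
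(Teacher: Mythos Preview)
Your three-stage decomposition (invariant, progress, final state) matches the paper's structure exactly, and your invariant argument and final-state analysis are correct and essentially coincide with the paper's Lemma~\ref{lem11} and its concluding paragraph. The genuine divergence is in the progress lemma.

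For progress, the paper avoids geometry entirely. Its argument (Lemma~\ref{propexone}) is: if a border particle $p$ fails condition~(I), then $G[M_G(p)\cap S]$ has at least two components $B_1,B_2$, and any path in $G[S\setminus\{s(p)\}]$ between $B_1$ and $B_2$ would, together with $p$, enclose one of the non-$S$ gaps of $M_G(p)$, contradicting hole-freeness. Hence every non-$S$-contractible border particle is an articulation of $G[S]$. If \emph{no} border particle were $S$-contractible, every border particle would be an articulation; but then the border cannot contain a cycle, so it is a forest, and its leaves (degree-one vertices) are trivially $S$-contractible---contradiction. This is short, uniform across the three grids, and sidesteps the boundary-walk subtleties you flag: there is no need to define a boundary cycle, worry about it being non-simple at articulations, or do any per-grid case analysis.

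Your Gauss--Bonnet/extremal-vertex route can be made to work, but it is heavier. The boundary walk is not a simple cycle once articulations are present, so the turning-number argument needs care; and the lexicographic extremum is not always itself $S$-contractible (e.g.\ in $\mathscr{S}$ with $S=\{(0,0),(0,1),(1,0)\}$, the lex-minimum $(0,0)$ has $M_G((0,0))\cap S=\{(0,1),(1,0)\}$, which is disconnected), so the ``pass to a neighbor'' step must be made precise. The paper's articulation/forest argument buys you exactly the uniformity you were hoping the definition of $M_G$ would provide, without the casework.
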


In Appendix \ref{appendixa}, the proof of Theorem \ref{theobavure} is given. Also, a bound on the complexity of the $S$-contraction algorithm is given. In Appendix \ref{appendixc}, it is explained how to combine the $S$-contraction algorithm with a general leader election algorithm.

\section{Assigning $k$-local identifiers to particles}
In this section, we combine the results from Section 3 and Appendix \ref{appendixd} in order to correctly compute a $k$-local identifier. In a first subsection, we describe a way to create a spanning tree of particles and a way to change the ports numbering of the different particles.
In a second subsection, we describe how to compute $k$-local identifiers based on the coloring functions from Appendix \ref{appendixd}.

We suppose that Algorithms~\ref{alg3} and~\ref{alg4} are preceded by a leader election algorithm (which could be Algorithm~\ref{alg1}). Then it follows that there is a single particle in a specific state (the leader) and all the remaining particles are in the same state (non elected).

\subsection{Re-organizing the particles}

By $N^{+}_{G}(u)$ we denote the set of port numbers which can communicate with particles occupying vertices from $N_{G}(u)$.
When there is a leader, we can easily compute a spanning tree using a distributed algorithm (see Appendix \ref{appendixb}).
Now suppose that for each particle $p$, we have two set of ports $\text{parent}(p)$ and $\text{child}(p)$ which contains the port numbers of the particles in communication with its parent and with its children, respectively, in the spanning tree. In this way, the required memory in order to store where are the children and the parent of the particle in a spanning tree is constant (since the maximum degree is bounded in the considered grids).

In our proposed Algorithm~\ref{alg3}, the goal is to change the way the port are numbered in order that every particle has its ports numbered by the same number going in the same cardinal direction in the different grids. This algorithm does not work if we do not have a leader among the different particles. The function $r_{G}$ used in Algorithm~\ref{alg3} is defined, depending the choice of $G$, as follows: $r_{\mathscr{S}}(i)=(i+2)\pmod{4}$, $r_{\mathscr{T}}(i)=(i+3)\pmod{6}$ and $r_{\mathscr{K}}(i)=(i+3)\pmod{6}$.
\begin{algorithm}
\caption{The port renumbering algorithm for a particle $p$.} 
\label{alg3}
\begin{algorithmic} 
\State \textbf{Case 1}: State \textbf{L}.
\State for each port $a$ from $\text{child}(p)$ send a message $m_a$, containing $a$, through port $a$.
\State \textbf{Case 2:} State \textbf{N}.
\If { the particle receives the message $m_b$, containing $b$, through the port $a$}
    \State change the port number $a$ to $r_{G}(b)$ and changes the port numbers of the other ports following the clockwise order;
    \State update both $\text{parent}(p)$ and $\text{child}(p)$.
\EndIf
\end{algorithmic}
\end{algorithm}

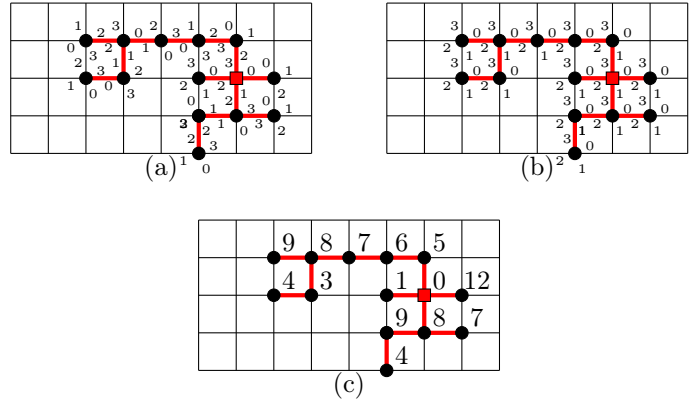
\begin{figure}[t]
\begin{center}
\begin{tikzpicture}[scale=1]

\foreach \x in {2,2.5,3,...,6}
{
\draw (\x+2.5,0) -- (\x+2.5,2);
}
\foreach \y in {0,0.5, ...,2}
{
\draw (2+2.5,\y) -- (6+2.5,\y);
}

\foreach \x in {2,2.5,3,...,6}
{
\draw (\x+2.5+5,0) -- (\x+2.5+5,2);
}
\foreach \y in {0,0.5, ...,2}
{
\draw (2+2.5+5,\y) -- (6+2.5+5,\y);
}

\draw[ultra thick, color=red] (5.5,1) -- (6,1);
\draw[ultra thick, color=red] (5.5,1.5) -- (7.5,1.5);
\draw[ultra thick, color=red] (6,1.5) -- (6,1);
\draw[ultra thick, color=red] (7.5,1.5) -- (7.5,0.5);
\draw[ultra thick, color=red] (7,1) -- (8,1);
\draw[ultra thick, color=red] (7,0.5) -- (8,0.5);
\draw[ultra thick, color=red] (7,0.5) -- (7,0);

\draw[ultra thick, color=red] (5.5+5,1) -- (6+5,1);
\draw[ultra thick, color=red] (5.5+5,1.5) -- (7.5+5,1.5);
\draw[ultra thick, color=red] (6+5,1.5) -- (6+5,1);
\draw[ultra thick, color=red] (7.5+5,1.5) -- (7.5+5,0.5);
\draw[ultra thick, color=red] (7+5,1) -- (8+5,1);
\draw[ultra thick, color=red] (7+5,0.5) -- (8+5,0.5);
\draw[ultra thick, color=red] (7+5,0.5) -- (7+5,0);

\node at (6,1) [circle,draw=black,fill=black,scale=0.5]{};
\node at (5.5,1) [circle,draw=black,fill=black,scale=0.5]{};
\node at (5.5,1.5) [circle,draw=black,fill=black,scale=0.5]{};
\node at (6,1.5) [circle,draw=black,fill=black,scale=0.5]{};
\node at (8,0.5) [circle,draw=black,fill=black,scale=0.5]{};
\node at (7,0) [circle,draw=black,fill=black,scale=0.5]{};
\node at (7,0.5) [circle,draw=black,fill=black,scale=0.5]{};
\node at (8,1) [circle,draw=black,fill=black,scale=0.5]{};
\node at (6.5,1.5) [circle,draw=black,fill=black,scale=0.5]{};
\node at (7,0.5) [circle,draw=black,fill=black,scale=0.5]{};
\node at (7,1.5) [circle,draw=black,fill=black,scale=0.5]{};
\node at (7,1) [circle,draw=black,fill=black,scale=0.5]{};
\node at (7.5,0.5) [circle,draw=black,fill=black,scale=0.5]{};
\node at (7.5,1.5) [circle,draw=black,fill=black,scale=0.5]{};
\node at (7.5,1) [regular polygon, regular polygon sides=4,draw=black,fill=red,scale=0.5]{};

\node at (6.2,1.1) []{\tiny{2}};
\node at (5.7,1.1) []{\tiny{3}};
\node at (5.7,1.6) []{\tiny{2}};
\node at (6.2,1.6) []{\tiny{0}};
\node at (8.2,0.6) []{\tiny{1}};
\node at (7.2,0.1) []{\tiny{3}};
\node at (8.2,1.1) []{\tiny{1}};
\node at (6.7,1.6) []{\tiny{3}};
\node at (7.2,0.6) []{\tiny{1}};
\node at (7.2,1.6) []{\tiny{2}};
\node at (7.2,1.1) []{\tiny{0}};
\node at (7.7,0.6) []{\tiny{3}};
\node at (7.7,1.6) []{\tiny{1}};
\node at (7.7,1.1) []{\tiny{0}};

\node at (6.1,0.8) []{\tiny{3}};
\node at (5.6,0.8) []{\tiny{0}};
\node at (5.6,1.3) []{\tiny{3}};
\node at (6.1,1.3) []{\tiny{1}};
\node at (8.1,0.3) []{\tiny{2}};
\node at (7.1,-0.2) []{\tiny{0}};
\node at (8.1,0.8) []{\tiny{2}};
\node at (6.6,1.3) []{\tiny{0}};
\node at (7.1,0.3) []{\tiny{2}};
\node at (7.1,1.3) []{\tiny{3}};
\node at (7.1,0.8) []{\tiny{1}};
\node at (7.6,0.3) []{\tiny{0}};
\node at (7.6,1.3) []{\tiny{2}};
\node at (7.6,0.8) []{\tiny{1}};

\node at (5.8,0.9) []{\tiny{0}};
\node at (5.3,0.9) []{\tiny{1}};
\node at (5.3,1.4) []{\tiny{0}};
\node at (5.8,1.4) []{\tiny{2}};
\node at (7.8,0.4) []{\tiny{3}};
\node at (6.8,-0.1) []{\tiny{1}};
\node at (6.8,0.4) []{\tiny{2}};
\node at (6.3,1.4) []{\tiny{1}};
\node at (6.8,0.4) []{\tiny{3}};
\node at (6.8,1.4) []{\tiny{0}};
\node at (6.8,0.9) []{\tiny{2}};
\node at (7.3,0.4) []{\tiny{1}};
\node at (7.3,1.4) []{\tiny{3}};
\node at (7.3,0.9) []{\tiny{2}};

\node at (5.9,1.2) []{\tiny{1}};
\node at (5.4,1.2) []{\tiny{2}};
\node at (5.4,1.7) []{\tiny{1}};
\node at (5.9,1.7) []{\tiny{3}};
\node at (7.9,0.7) []{\tiny{0}};
\node at (6.9,0.2) []{\tiny{2}};
\node at (7.9,1.2) []{\tiny{0}};
\node at (6.4,1.7) []{\tiny{2}};
\node at (6.9,0.7) []{\tiny{0}};
\node at (6.9,1.7) []{\tiny{1}};
\node at (6.9,1.2) []{\tiny{3}};
\node at (7.4,0.7) []{\tiny{2}};
\node at (7.4,1.7) []{\tiny{0}};
\node at (7.4,1.2) []{\tiny{3}};

\node at (6+5,1) [circle,draw=black,fill=black,scale=0.5]{};
\node at (5.5+5,1) [circle,draw=black,fill=black,scale=0.5]{};
\node at (5.5+5,1.5) [circle,draw=black,fill=black,scale=0.5]{};
\node at (6+5,1.5) [circle,draw=black,fill=black,scale=0.5]{};
\node at (8+5,0.5) [circle,draw=black,fill=black,scale=0.5]{};
\node at (7+5,0) [circle,draw=black,fill=black,scale=0.5]{};
\node at (7+5,0.5) [circle,draw=black,fill=black,scale=0.5]{};
\node at (8+5,1) [circle,draw=black,fill=black,scale=0.5]{};
\node at (6.5+5,1.5) [circle,draw=black,fill=black,scale=0.5]{};
\node at (7+5,0.5) [circle,draw=black,fill=black,scale=0.5]{};
\node at (7+5,1.5) [circle,draw=black,fill=black,scale=0.5]{};
\node at (7+5,1) [circle,draw=black,fill=black,scale=0.5]{};
\node at (7.5+5,0.5) [circle,draw=black,fill=black,scale=0.5]{};
\node at (7.5+5,1.5) [circle,draw=black,fill=black,scale=0.5]{};
\node at (7.5+5,1) [regular polygon, regular polygon sides=4,draw=black,fill=red,scale=0.5]{};

\node at (6.2+5,1.1) []{\tiny{0}};
\node at (5.7+5,1.1) []{\tiny{0}};
\node at (5.7+5,1.6) []{\tiny{0}};
\node at (6.2+5,1.6) []{\tiny{0}};
\node at (8.2+5,0.6) []{\tiny{0}};
\node at (7.2+5,0.1) []{\tiny{0}};
\node at (7.2+5,0.6) []{\tiny{0}};
\node at (8.2+5,1.1) []{\tiny{0}};
\node at (6.7+5,1.6) []{\tiny{0}};
\node at (7.2+5,1.6) []{\tiny{0}};
\node at (7.2+5,1.1) []{\tiny{0}};
\node at (7.7+5,0.6) []{\tiny{0}};
\node at (7.7+5,1.6) []{\tiny{0}};
\node at (7.7+5,1.1) []{\tiny{0}};

\node at (6.1+5,0.8) []{\tiny{1}};
\node at (5.6+5,0.8) []{\tiny{1}};
\node at (5.6+5,1.3) []{\tiny{1}};
\node at (6.1+5,1.3) []{\tiny{1}};
\node at (8.1+5,0.3) []{\tiny{1}};
\node at (7.1+5,-0.2) []{\tiny{1}};
\node at (7.1+5,0.3) []{\tiny{1}};
\node at (8.1+5,0.8) []{\tiny{1}};
\node at (6.6+5,1.3) []{\tiny{1}};
\node at (7.1+5,0.3) []{\tiny{1}};
\node at (7.1+5,0.8) []{\tiny{1}};
\node at (7.6+5,0.3) []{\tiny{1}};
\node at (7.6+5,1.3) []{\tiny{1}};
\node at (7.6+5,0.8) []{\tiny{1}};

\node at (5.8+5,0.9) []{\tiny{2}};
\node at (5.3+5,0.9) []{\tiny{2}};
\node at (5.3+5,1.4) []{\tiny{2}};
\node at (5.8+5,1.4) []{\tiny{2}};
\node at (7.8+5,0.4) []{\tiny{2}};
\node at (6.8+5,-0.1) []{\tiny{2}};
\node at (6.8+5,0.4) []{\tiny{2}};
\node at (7.8+5,0.9) []{\tiny{2}};
\node at (6.3+5,1.4) []{\tiny{2}};
\node at (6.8+5,1.4) []{\tiny{2}};
\node at (6.8+5,0.9) []{\tiny{2}};
\node at (7.3+5,0.4) []{\tiny{2}};
\node at (7.3+5,1.4) []{\tiny{2}};
\node at (7.3+5,0.9) []{\tiny{2}};

\node at (5.9+5,1.2) []{\tiny{3}};
\node at (5.4+5,1.2) []{\tiny{3}};
\node at (5.4+5,1.7) []{\tiny{3}};
\node at (5.9+5,1.7) []{\tiny{3}};
\node at (7.9+5,0.7) []{\tiny{3}};
\node at (6.9+5,0.2) []{\tiny{3}};
\node at (7.9+5,1.2) []{\tiny{3}};
\node at (6.4+5,1.7) []{\tiny{3}};
\node at (6.9+5,0.7) []{\tiny{3}};
\node at (6.9+5,1.7) []{\tiny{3}};
\node at (6.9+5,1.2) []{\tiny{3}};
\node at (7.4+5,0.7) []{\tiny{3}};
\node at (7.4+5,1.7) []{\tiny{3}};
\node at (7.4+5,1.2) []{\tiny{3}};

\node at (6.5,-0.2) {(a)};
\node at (6.5+5,-0.2) {(b)};
\end{tikzpicture}
\end{center}

\begin{center}
\begin{tikzpicture}[scale=1]

\foreach \x in {2,2.5,3,...,6}
{
\draw (\x+2.5,0) -- (\x+2.5,2);
}
\foreach \y in {0,0.5, ...,2}
{
\draw (2+2.5,\y) -- (6+2.5,\y);
}

\draw[ultra thick, color=red] (5.5,1) -- (6,1);
\draw[ultra thick, color=red] (5.5,1.5) -- (7.5,1.5);
\draw[ultra thick, color=red] (6,1.5) -- (6,1);
\draw[ultra thick, color=red] (7.5,1.5) -- (7.5,0.5);
\draw[ultra thick, color=red] (7,1) -- (8,1);
\draw[ultra thick, color=red] (7,0.5) -- (8,0.5);
\draw[ultra thick, color=red] (7,0.5) -- (7,0);

\node at (6,1) [circle,draw=black,fill=black,scale=0.5]{};
\node at (5.5,1) [circle,draw=black,fill=black,scale=0.5]{};
\node at (5.5,1.5) [circle,draw=black,fill=black,scale=0.5]{};
\node at (6,1.5) [circle,draw=black,fill=black,scale=0.5]{};
\node at (8,0.5) [circle,draw=black,fill=black,scale=0.5]{};
\node at (7,0) [circle,draw=black,fill=black,scale=0.5]{};
\node at (7,0.5) [circle,draw=black,fill=black,scale=0.5]{};
\node at (8,1) [circle,draw=black,fill=black,scale=0.5]{};
\node at (6.5,1.5) [circle,draw=black,fill=black,scale=0.5]{};
\node at (7,0.5) [circle,draw=black,fill=black,scale=0.5]{};
\node at (7,1.5) [circle,draw=black,fill=black,scale=0.5]{};
\node at (7,1) [circle,draw=black,fill=black,scale=0.5]{};
\node at (7.5,0.5) [circle,draw=black,fill=black,scale=0.5]{};
\node at (7.5,1.5) [circle,draw=black,fill=black,scale=0.5]{};
\node at (7.5,1) [regular polygon, regular polygon sides=4,draw=black,fill=red,scale=0.5]{};

\node at (6.2,1.2) []{3};
\node at (5.7,1.2) []{4};
\node at (5.7,1.7) []{9};
\node at (6.2,1.7) []{8};
\node at (8.2,0.7) []{7};
\node at (7.2,0.2) []{4};
\node at (8.2,1.2) []{12};
\node at (6.7,1.7) []{7};
\node at (7.2,0.7) []{9};
\node at (7.2,1.7) []{6};
\node at (7.2,1.2) []{1};
\node at (7.7,0.7) []{8};
\node at (7.7,1.7) []{5};
\node at (7.7,1.2) []{0};

\node at (6.5,-0.2) {(c)};

\end{tikzpicture}
\end{center}
\caption{One spanning tree of particles, a possible numbering of the ports of the particles before (Figure~\ref{portreconfig}.a) and after the execution of Algorithm~\ref{alg3} (Figure~\ref{portreconfig}.b) and the 4-identifier obtained by executing Algorithm~\ref{alg4} (Figure~\ref{portreconfig}.c) in the square grid (square: leader; thick line: edge of the spanning tree; small number: port number of a particle; big number: 4-identifier of a particle).}
\label{portreconfig}
\end{figure}

The idea behind Algorithm~\ref{alg3} is to reproduce, in each particle, the way the ports are numbered in the leader particle. To achieve this goal, each particle $p$ receives a message from its parent containing the port number of the parent connected to $p$ and $p$ renumbers its own ports in order that its port numbers are coherent with the sent number. Figure~\ref{portreconfig}.a and Figure~\ref{portreconfig}.b illustrate the port numbers of particles before and after the execution of Algorithm~\ref{alg3}.

\subsection{The $k$-local identifiers}

Now, we aim to give to each particle a variable $id$, called its $k$-local identifier, such that every two particles $p_{1}$ and $p_{2}$ with the same identifier satisfy $d_{G}(s(p_{1}),s(p_{2}))>k$. If we suppose that the particles have not a memory of at least $\log_{2}(n)$ bits, for $n=|P|$, then it is not possible to record a unique variable for each particle.
However, it is possible to have a $k$-local identifier in the three considered grids only using at most $\log_{2}((k+1)^{2})$ bits where $k$ is a parameter given by the user. 
Our proposed Algorithm~\ref{alg4} presents an optimal way (in term of memory) to compute $k$-local identifiers. We suppose that the port renumbering algorithm (Algorithm~\ref{alg3}) has been done before executing Algorithm~\ref{alg4}.

\begin{algorithm}
\caption{The $k$-local identifier algorithm for a particle $p$.} 
\label{alg4}
\begin{algorithmic} 
\State \textbf{Case 1:} State \textbf{L}.
\State set $i=0$, $j=0$, $id=0$;
\State send $i$ and $j$ through each port from $\text{child}(p)$.
\State \textbf{Case 2:} State \textbf{N}.
\If { the particle receives the integers $i'$ and $j'$ through the port $a$}
    \State set $i=I^{k}_{G}(i',a)$, $j=I^{k}_{G}(j',a)$, $id=f^k_{G}(i,j)$;
    \State send $i$ and $j$ through each port from $\text{child}(p)$.
\EndIf
\end{algorithmic}
\end{algorithm}

Algorithm~\ref{alg4} consists in assigning a variable which corresponds to a color in a coloring of the $k^{\text{th}}$  power on the grid. 
More precisely, the function $f_G^k$ consists in assigning a color depending the Cartesian coordinate of the vertices. Since the colors are given following a pattern, the Cartesian coordinate can be stored relatively to the size of the patterns.
In Algorithm~\ref{alg4}, the leader affects to itself the color $0$ and following the direction where the messages are transmitted, the particles reproduce the coloring patterns given in Appendix \ref{appendixd}. 
The functions $f^k_G$ and $I^k_G$, $J^k_G$, used in Algorithm~\ref{alg4} are defined, depending on the choice of $G$, as follows: $f^k_{\mathscr{S}}(i,j)= (i+kj) \pmod{m_{k} }$, $f^k_{\mathscr{K}}(i,j)= i_{\pmod{k+1}} +(k+1)j_{\pmod{k+1}}$ and $ f^k_{\mathscr{T}}(i,j)= (i_{\pmod{3(k+1)/2}}+j(3(k+1)/2)+ \lfloor 2 j/(k+1) \rfloor (k+1)/2)) \pmod{m'_{k}}$ if $k$ is odd or $f^k_{\mathscr{T}}(i,j)= (i+(3k/2+1) j) \pmod{m'_{k}}$ otherwise; $I_{G}^{k}(i,a)=i$ if $(a=1;3\land G\cong \mathscr{S})\lor (a=1;4\land G\cong \mathscr{T})\lor (a=2;6\land G\cong \mathscr{K})$, $I_{\mathscr{S}}^{k}(i,a)= i+1 \pmod{\lceil (k+1)^2/2 \rceil}$ if $a=0$, $I_{\mathscr{S}}^{k}(i,a)= i-1 \pmod{\lceil (k+1)^2/2 \rceil}$ if $a=2$, $I_{\mathscr{T}}^{k}(i,a)= i+1 \pmod{\lceil 3(k+1)^2/4 \rceil}$ if $a=0;5$, $I_{\mathscr{T}}^{k}(i,a)= i-1 \pmod{\lceil 3(k+1)^2/4 \rceil}$ if $a=2;3$, $I_{\mathscr{K}}^{k}(i,a)= i+1 \pmod{k+1}$ if $a=0;1;7$ and $I_{\mathscr{K}}^{k}(i,a)= i-1 \pmod{k+1}$ if $a=3;4;5$; $J_{G}^{k}(j,a)=i$ if $(a=0;2\land G\cong \mathscr{S})\lor (a=0;6\land G\cong \mathscr{T}) \lor (a=0;4 \land G\cong \mathscr{K})$, $J_{\mathscr{S}}^{k}(j,a)= j+1 \pmod{\lceil (k+1)^2/2 \rceil}$ if $a=1$, $J_{\mathscr{S}}^{k}(j,a)= i-1 \pmod{\lceil (k+1)^2/2 \rceil}$ if $a=3$, $J_{\mathscr{T}}^{k}(j,a)= i+1 \pmod{\lceil 3(k+1)^2/4 \rceil}$ if $a=1;2$, $J_{\mathscr{T}}^{k}(j,a)= i-1 \pmod{\lceil 3(k+1)^2/4 \rceil}$ if $a=4;5$, $J_{\mathscr{K}}^{k}(j,a)= i+1 \pmod{k+1}$ if $a=1;2;3$ and $J_{\mathscr{K}}^{k}(j,a)= i-1 \pmod{k+1}$ if $a=5;6;7$. Note that the functions $I^k_G$ and $J^k_G$ are used to determine the Cartesian coordinate of a particle using the Cartesian coordinate of a neighbor and the port number of this neighbor.

Since the values of $f_{G}^{k}(i,j)$ is bounded by $3(k+1)^2/4$, if $G$ is isomorphic to one of the three grids, the size of the messages will not exceed $\log_2(3(k+1)^2/4)$. As for Algorithm~\ref{alg3}, the number of sent messages is $|V(P)|-1$. Figure~\ref{portreconfig}.c illustrates the obtained 4-identifiers after the execution of Algorithm~\ref{alg4}.

Since the particles can move during the execution of an algorithm, the $k$-local identifiers may become not valid anymore ( i.e., there may be two particles $p_1$ and $p_2$ with the same $k$-local identifier and with $d_G(s(p_1),s(p_2))\le k$) if the structure of the particle graph $P$ on $G$ changes.
It is possible to keep a valid $k$-local identifier in case a particle moves in a direction of a port $a$ by setting $id=f^k_{G}(I^{k}_{G}(i,r_{G}(a)),J^{k}_{G}(j,r_{G}(a)))$ as the new $k$-local identifier. It corresponds to update the variable $id$ which corresponds to a color in a coloring of the $k^{\text{th}}$ power on the grid in function of the new position of the particle.
Also, in the case a particle do $\ell$ movements, by storing the successive directions of movement of the particle during these $\ell$ movements, it is also possible to update the value of the $k$-local identifier in order that it remains valid.

Note that for both Algorithms~\ref{alg3} and~\ref{alg4} finish after at most $h$ rounds, $h$ being the height of the spanning tree. Also the number of sent messages in both Algorithms~\ref{alg3} and~\ref{alg4} is $|V(G)|-1$ (the number of edges in a spanning tree).

\section{Conclusion}
In this paper, we have presented a new leader election algorithm based on local computation. We have also presented an algorithm which affects a different variable for every two particles $p_{1}$ and $p_{2}$ at distance at most $k$. All the presented algorithms only require a $O(1)$-space memory. This complexity makes it possible to use our algorithms for programmable matter. Moreover, in case of movements of particles, there is no need of communication in order to update the $k$-local identifiers.

As future work, it would be interesting to determine a more general deterministic leader election algorithm in our algorithmic framework that can take into account fault tolerance. Also, it would be interesting to extend the presented results to 3D grids. Another interesting question could be to use our results to clustering the set of particles in several sets which induce subgraphs of small diameter.
\section*{Acknowledgments}
This work was supported by the French "Investissements d'Avenir" program, project ISITE-BFC (contract ANR-15-IDEX-03).

\begin{subappendices}
\renewcommand{\thesection}{\Alph{section}}%
\section{Proof of Theorem \ref{theobavure} and bound on the complexity of the $S$-contraction algorithm \label{appendixa}}
The three lemmas presented in this appendix are used in order to prove Theorem \ref{theobavure}.

In the following lemma, we describe how to determine, in the context of programmable matter, if a particles is $S$-contractible or not.
\begin{lem}\label{artornot}
Let $G$ be an infinite grid among $\mathscr{S}$, $\mathscr{T}$ and $\mathscr{K}$ and let $S\subseteq V(P)$, for $P$ the particle graph on $G$ and $S$ the set of vertices occupied by all the particles in the same fixed state.
One round is sufficient in order that every particle determine if it is $S$-contractible or not if $G$ is isomorphic to $\mathscr{S}$.
Otherwise, if $G$ is isomorphic to $\mathscr{T}$ or $\mathscr{K}$, no round is necessary.
\end{lem}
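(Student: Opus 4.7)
The proof proposal is straightforward and hinges on comparing what information a particle has immediately available (from the framework hypothesis that each particle knows the state of its neighbors) against what information is needed to evaluate the two conditions defining $S$-contractibility.

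First I would handle the triangular and king grid cases. By definition, $M_G(p)=N_G(p)$ when $G\cong\mathscr{T}$ or $G\cong\mathscr{K}$. Since the framework guarantees that $p$ already knows, at any time, the state of every particle in $N_G(p)$, and since $p$ knows the port numbers (hence the planar positions) of its neighbors, $p$ can locally compute $N_G(p)\cap S$. Condition II amounts to comparing $|N_G(p)\cap S|$ with $\Delta(G)$, which is immediate. For condition I, the adjacency relation inside $G[N_G(p)\cap S]$ is entirely determined by which pairs of neighbors of $p$ are adjacent in $G$; this is a fixed function of the port numbers (two port numbers correspond to adjacent vertices iff they are consecutive, modulo $\Delta(G)$, in the clockwise order — and additionally the two opposite ports in $\mathscr{T}$ are not adjacent). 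Hence the connectedness check is a constant-size combinatorial test on locally known data, requiring no communication.

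For the square grid the extra data needed are the states of the four corner vertices of $C(p)$, which are at distance $2$ from $p$ and therefore not directly known to $p$. The key observation is that each corner vertex $(i\pm 1,j\pm 1)$ of $p=(i,j)$ is adjacent in $\mathscr{S}$ to exactly two neighbors of $p$, namely $(i\pm 1,j)$ and $(i,j\pm 1)$. Thus one round of communication is sufficient: in that round, each neighbor $q$ of $p$ sends to $p$ the states of its own neighbors (or, equivalently, the subset of $N_G(q)\cap S$), and $p$ reconstructs $M_G(p)\cap S$ by combining these messages with its own knowledge of $N_G(p)\cap S$. Once $M_G(p)\cap S$ is known, conditions I and II are again verified by the same constant-size local test as before, using the planar positions encoded by the port numbers.

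The main (only) obstacle to formalize is the connectedness test on $G[M_G(p)\cap S]$: one must spell out, for each of the three grids, precisely which pairs among the positions of $N_G(p)\cup C(p)$ are adjacent in $G$, and observe that this adjacency depends only on the port numbers, so that $p$ can build the graph $G[M_G(p)\cap S]$ and run a trivial breadth-first search on at most $\Delta(G)+|C(p)|\le 8$ vertices in $O(1)$ local work. With this in hand the two cases of the statement follow immediately, completing the proof.
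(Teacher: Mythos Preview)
Your argument is correct and follows essentially the same route as the paper's proof: for $\mathscr{T}$ and $\mathscr{K}$ one observes that $M_G(p)=N_G(p)$ so everything is already local, and for $\mathscr{S}$ one round of messages from the neighbors suffices to recover the states at the four corner positions. The only cosmetic difference is that the paper phrases the connectedness test for Property~I concretely as ``the occupied port numbers form a single interval of consecutive integers modulo $\Delta(G)$'' (exploiting that the vertices of $M_G(p)$ form a cycle around $p$), whereas you phrase it as a generic constant-size BFS on at most eight vertices; both are equivalent $O(1)$ local checks.
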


\begin{proof}
Let $N^{+}(p)$ be the set of port labels on which $p$ can communicate with particles from its neighborhood.
In order to verify that $M_{G}(p)\cap S$ is connected in the triangular or king grids, it suffices to verify that $N^{+}_{G}(p)\cap S$ forms an interval of consecutive integers (by considering that 0 and $\Delta(G)-1$ are consecutive).
For example, $\{0 ,4,5\}$ contains successive integers in the triangular grid but that is not the case for $\{0,2,5\}$. Such verification in the triangular and king grids can be done during any local computation.
Figure \ref{figstmac} illustrates three possible cases that could happen for a particle in the triangular grid. On the left part of Figure \ref{figstmac}, the particle does not satisfy Property I) but satisfies Property II). On the middle part of Figure \ref{figstmac}, the particle satisfies Property I) and does not satisfy Property II). Finally, on the right part of Figure \ref{figstmac}, the particle satisfies both Properties I) and II).

In the square grid, in order to test if a particle $p$ is such that $M_{G}(p)\cap S$ is connected, it requires to receive $N^{+}_{G}(p')\cap S$, from the particle $p'$ in the neighborhood of $p$ and afterward to test if $N^{+}_{G}(p)$ only contains consecutive integers (by considering that 0 and 3 are consecutive) and then to verify, for any two successive particles $p'$ and $p''$ from the neighborhood, that the vertex which corresponds to the corner adjacent to both $p'$ and $p''$ is occupied by a particle.

If $G$ is among $\mathscr{T}$ and $\mathscr{K}$, then no round is required to know if a particle is in $S$ or not (since a particle know the state of its neighbors). If $G$ is isomorphic to $\mathscr{S}$, then, in one round, which consists in sending the values of $N^{+}(p)\cap S$ to the adjacent particles, every particle knows if it is $S$-contractible or not.
$\hfill\qed$
\end{proof}
The following lemma is be useful in order to prove that our leader election algorithm works correctly.

\begin{lem}\label{lem11}
Let $G$ be an infinite grid among $\mathscr{S}$, $\mathscr{T}$ and $\mathscr{K}$ and let $S\subseteq V(P)$, for $P$ the particle graph on $G$. Let $p$ be an $S$-contractible particle.
If $S$ is connected and hole-free, then $S-\{s(p)\}$ is connected and hole-free.
\end{lem}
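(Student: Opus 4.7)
The plan is to establish connectivity and hole-freeness separately. Connectivity follows immediately from property (I) via a rerouting argument, while hole-freeness requires a more delicate contradiction using property (II). Note that $s(p) \notin M_G(p)$ in all three grids, which is what makes the rerouting genuinely avoid $p$.

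First, to show $G[S-\{s(p)\}]$ is connected, I would pick arbitrary $q_1, q_2 \in S - \{s(p)\}$ and a path $\pi$ joining them in $G[S]$, which exists since $S$ is connected. If $s(p)$ lies on $\pi$, then each occurrence is flanked by two vertices $q', q'' \in N_G(p) \cap S \subseteq M_G(p) \cap S$. By property (I), $G[M_G(p) \cap S]$ is connected, and since $s(p) \notin M_G(p)$, any path between $q'$ and $q''$ inside this subgraph avoids $p$. Replacing each segment $q' \to s(p) \to q''$ of $\pi$ by such a detour yields a walk in $G[S - \{s(p)\}]$ from $q_1$ to $q_2$, hence a path.

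For hole-freeness, I would argue by contradiction: suppose $H$ is a hole of $S - \{s(p)\}$. If $s(p) \notin V(H)$, then $V(H) \cap S = \emptyset$, and the inclusion $N_G(u) \subseteq V(H) \cup (S - \{s(p)\}) \subseteq V(H) \cup S$ makes $H$ a hole of $S$, contradicting hole-freeness of $S$. Hence $s(p) \in V(H)$. If moreover $V(H) = \{s(p)\}$, condition (iii) of the hole definition gives $N_G(s(p)) \subseteq S - \{s(p)\}$, i.e.\ every neighbor of $p$ lies in $S$, contradicting property (II). Otherwise $V(H) \setminus \{s(p)\}$ is non-empty, and I would pick an arbitrary connected component $H'$ of the induced subgraph $G[V(H) \setminus \{s(p)\}]$ and claim that $H'$ is a hole of $S$, again contradicting the hypothesis.

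Conditions (i) and (ii) of being a hole of $S$ are immediate for $H'$: it is a non-empty finite connected induced subgraph contained in $V(H) \setminus \{s(p)\}$, which is disjoint from $S$ because $V(H) \cap S \subseteq \{s(p)\}$. The delicate step, and the main obstacle, is condition (iii): for $u \in V(H')$, one must refine the inclusion $N_G(u) \subseteq V(H) \cup (S - \{s(p)\})$ into $N_G(u) \subseteq V(H') \cup S$. The key observation is that any neighbor of $u$ lying in $V(H) \setminus \{s(p)\}$ must lie in the same connected component of $G[V(H) \setminus \{s(p)\}]$ as $u$, namely $V(H')$; the vertex $s(p)$ itself, if adjacent to $u$, lies in $S$; and any neighbor in $S - \{s(p)\}$ trivially lies in $S$. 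Thus removing $s(p)$ may split $V(H)$ into several components, but each of them still satisfies the hole axioms for $S$, which is enough to derive the contradiction.
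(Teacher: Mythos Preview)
Your proof is correct and, for the connectivity part, follows the same rerouting idea as the paper (just spelled out more carefully).

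For hole-freeness your argument differs slightly from the paper's. The paper fixes a vertex $v\in N_G(p)\setminus S$ (which exists by property~II), assumes a new hole appears, and asserts---invoking the connectedness of $G[M_G(p)\cap S]$---that $v$ must already have been in a hole of $G[S]$. Your case analysis reaches the same contradiction but is more explicit: you show directly that any new hole either equals $\{s(p)\}$ (ruled out by property~II) or yields, via a connected component of $H-\{s(p)\}$, a hole of $S$. A pleasant by-product of your write-up is that it makes clear property~I is not needed for the hole-freeness half; only property~II and the fact that $s(p)\in S$ are used. Property~I is genuinely required only for the connectivity half, where your argument and the paper's coincide.
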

\begin{proof}
First, note that in all three grids, the fact that $|N_{G}(p)\cap S|<|N_{G}(p)|$ implies that there is a vertex $v$ in $N_{G}(p)\setminus S$. By contradiction, suppose we create a hole in $G[S]$ by removing the vertex $s(p)$ from $S$. This implies, since $G[M_{G}(p)\cap S]$ is connected, that $v$ was already in a hole from $G[S]$.
Second, since $G[M_{G}(p)\cap S]$ is connected, we are sure that the subgraph $G[S\setminus\{s(p)\}]$ is connected.
$\hfill\qed$
\end{proof}

To ensure that our leader election algorithm works correctly, it remains to prove that there always exists an $S$-contractible particle. That is what we do in the following Lemma.
\begin{lem}\label{propexone}
Let $S\subseteq V(P)$, for $P$ the particle graph on $G$. If $G[S]$ is hole-free, then there always exists an $S$-contractible particle in $S$.
\end{lem}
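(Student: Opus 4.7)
My plan is to exhibit a particle $p \in V(S)$ that is both on the geographical border of $V(S)$ and is not an articulation of $G[V(S)]$, and then to show that any such particle must be $S$-contractible, with hole-freeness providing the contradiction needed in the last step.

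Without loss of generality I assume $G[V(S)]$ is connected (otherwise apply the argument componentwise, noting that distinct components have no $G$-adjacencies between them). To locate a boundary, non-articulation particle $p$, I consider the block--cut tree of $G[V(S)]$. If $G[V(S)]$ is $2$-connected then every particle is a non-articulation, and since $V(S)$ is finite while $V(G)$ is infinite, at least one particle is on the border, so pick any such. Otherwise, take a leaf block $B$ of the block--cut tree with unique cut vertex $c$; the particles in $V(B) \setminus \{c\}$ are non-articulations in $G[V(S)]$. A key observation is that if every particle in $V(B)\setminus\{c\}$ were interior to $V(S)$, then each such particle would have all its $G$-neighbors inside $V(B)$ (since the $G$-edges at a non-articulation vertex lie in a single block), making $V(B)\setminus\{c\}$ a finite subset of $V(G)$ whose entire $G$-boundary is contained in the single vertex $\{c\}$ -- impossible for any non-empty finite subset of $V(G)$. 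Hence at least one particle of $V(B)\setminus\{c\}$ is on the border; take it as $p$. Condition II of the definition of $S$-contractibility is immediate.

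For condition I, I argue by contradiction: suppose $G[M_G(p)\cap V(S)]$ is disconnected. In the cyclic ordering of $M_G(p)$ (an $8$-cycle for the square and king grids, a $6$-cycle for the triangular grid) there are then at least two $V(S)$-arcs $A_1, A_2$ separated by two non-$V(S)$-arcs $N_1, N_2$; choose representatives $u\in A_1$, $v\in A_2$, $w\in N_1$, $w'\in N_2$. Since $p$ is not an articulation, there is an $S$-path $Q$ from $u$ to $v$ in $G[V(S)\setminus\{s(p)\}]$. Concatenating $Q$ with the short in-cycle path from $u$ to $v$ through $w$ in $G[M_G(p)]$ yields a closed curve $C$ in the planar embedding of $G$. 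By the Jordan curve theorem, $C$ bounds a finite region of $\mathbb{R}^2$, and tracing the relative positions around $p$ shows that $w'$ lies in this bounded region. Consequently the connected component of $w'$ in $G-V(S)$ is finite, i.e.\ a hole of $G[V(S)]$, contradicting the hypothesis. Therefore $p$ is $S$-contractible.

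The main obstacle is making the Jordan-curve step rigorous: one must verify that $w'$ is indeed enclosed by $C$ regardless of the route $Q$ takes in $V(S)\setminus\{s(p)\}$. This reduces to a finite case analysis on the homotopy class of $Q$ relative to $p$ and on the relative positions of $u,v,w,w'$ in the cycle $G[M_G(p)]$; the argument is uniform across the three grids, differing only in whether $M_G(p)$ has six or eight vertices. An equivalent but more concrete way to present the same step is to directly exhibit a finite subgraph $H$ containing $w'$, with $V(H)\cap V(S)=\emptyset$ and $N_G(V(H))\subseteq V(H)\cup V(S)$ -- i.e., the desired hole -- built from the region enclosed by $C$.
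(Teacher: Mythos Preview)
Your overall strategy matches the paper's: both proofs hinge on the implication ``border $+$ non-articulation $+$ hole-free $\Rightarrow$ $S$-contractible'' (via a planarity/enclosure argument), and then produce such a particle. Where you diverge is in how that particle is found: the paper argues by contradiction that if every border particle were an articulation of $G[S]$ then the border $G[A]$ would be a forest whose leaves are already $S$-contractible, whereas you take a leaf block of the block--cut tree and show it must contain a non-cut border vertex. Both routes are legitimate; yours is more structural and avoids the side claim (unproved in the paper) that $G[A]$ is connected.

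There is, however, a genuine technical slip in your Jordan-curve step. By routing $C$ through $w\in N_1$ rather than through $p$, you make $C$ pass through non-$S$ vertices. Even granting that $w'$ lies in the bounded component of $\mathbb{R}^2\setminus C$ (which itself is not clearly independent of how $Q$ winds), a path from $w'$ in $G-V(S)$ can reach $C$ at some $x\in N_1$ and then exit through a non-$S$ neighbour of $x$ lying outside $M_G(p)$; so it does \emph{not} follow that the component of $w'$ in $G-V(S)$ is finite, and your proposed ``concrete'' $H$ need not satisfy $N_G(V(H))\subseteq V(H)\cup V(S)$. The easy fix is to close the curve through $p$ instead: take $C^{*}=Q\cup\{u\text{--}p\text{--}v\}$, using the straight segments $up$ and $pv$ in the plane (these are actual edges in $\mathscr{T}$ and $\mathscr{K}$; in $\mathscr{S}$, when $u$ or $v$ is a corner, no grid edge crosses the open diagonal). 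Then every grid vertex on $C^{*}$ lies in $S$, the two arcs $N_1,N_2$ sit on opposite sides of the local diameter $u$--$p$--$v$, hence one of $w,w'$ is enclosed by $C^{*}$ and its $G-V(S)$-component is genuinely trapped --- which is exactly the hole the paper's one-line justification is pointing at.
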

\begin{proof}
Note that there exists a particle on the border of $G[S]$ since $S$ is finite. Let $A$ be the set of particles on the border of $G[S]$. 
For any particle $p$, the fact that there are at least two connected components $B_1$ and $B_2$ in $G[M_G(p)\cap S]$ implies that there is no path in $G[S\setminus \{s(p)\}]$ between any vertex of $B_1$ and a vertex of $B_2$, since it would imply the existence of a hole in $G[S]$ containing a vertex from $M_G(p)\setminus S$.
Therefore, if $p\in A$ and if $p$ is not an $S$-contractible particle, then $p$ is an articulation of $G[S]$.

Now suppose, by contradiction, that there is no $S$-contractible particle in $S$. By the previous remark, the graph $G[A]$ is connected and all particles of $A$ are articulations of $G[S]$. However, a finite graph containing a cycle contains vertices which are not articulation of $G[S]$. Thus, $G[A]$ contains no cycle ($G[A]$ is a forest). However, by definition, the leaves (the vertices of degree $1$) are $S$-contractible. Thus, we obtain a contradiction with the fact that there is no $S$-contractible particle in $S$.
$\hfill\qed$
\end{proof}
In the case $P$ is hole-free, note that by Lemma~\ref{lem11} and Lemma~\ref{propexone} there is always a particle which is both on the border of $G[S]$ and not an articulation of $G[S]$.

\begin{proof}[Proof of Theorem \ref{theobavure}]
Note that before the execution of the algorithm, the set $S$ is the set $V(P)$. Since $P$ is hole-free and connected and by Lemma~\ref{lem11}, $S$ remains connected and hole-free during the execution of the algorithm.
By Lemma~\ref{propexone}, there is always a particle in $S$ which is $S$-contractible (every particle on the border which is not an articulation is $S$-contractible). Thus, for every round, the number of particles in state \textbf{C} strictly decreases. Since $|V(P)|$ is finite, we are sure that at some point, $S$ will only contain one vertex. If at some point, $S$ contains one vertex, there will be at least one elected leader.

Finally, note that the fact that there are two elected leaders contradicts the fact that $S$ remains connected during the execution of the algorithm.
$\hfill\qed$
\end{proof}

Let $G'$ be a subgraph of $G$ such that $G'$ is hole-free, the \textit{radius} of $G'$, denoted by $r(G')$, is given by $r(G')=\min_{u\in V(G')}$ $\{ \max_{v\in A} ( d_{G'}(u,v)) \}$, for $A$ the set of the particles on the border of $G'$.
Moreover, let $h(T)$ be the \textit{height} of a tree $T$, i.e., $h(T)=$ $\min_{u\in V(T)}$ $\{\max_{v\in V(T),\ |N_{T}(v)|=1}$ $ (d_{T}(u,v))\}$ and let $mtree(G')$ be the maximum height among all induced subgraphs of $G'$ which are trees, i.e., among the set $\{ G'[B] |\ B\subseteq V(G'),\ G'[B]$ is a tree $\}$.

In the following Proposition, we give a bound on the required number of rounds for the termination of Algorithm \ref{alg1}.

\begin{prop}
Let $S$ be the set of particles in state \textbf{C} and $P$ be the particle graph on $G$.
Moreover, let $b_G=r(P)+mtree(P)+1$ if $G$ is isomorphic to $\mathscr{T}$ or $\mathscr{K}$ or $b_G=2(r(P)+mtree(P))+2$ if $G$ is isomorphic to $\mathscr{S}$.
If $P$ is hole-free, then after $b_G(P)$ rounds of the $S$-contraction algorithm on $P$, one particle will be the leader.

\end{prop}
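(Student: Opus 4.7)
The plan is to bound the number of rounds of the $S$-contraction algorithm in three steps: first reduce to a notion of ``effective round'', second bound the number of effective rounds required to reduce $|S|$ to one, and third account for the single extra effective round needed to promote the surviving particle to state \textbf{L}.

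\medskip

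\noindent\textbf{Step 1 (reduction to effective rounds).} Define an \emph{effective round} as the sequence of actual rounds needed to perform one pass of the contraction rule on all currently $S$-contractible particles. By Lemma~\ref{artornot}, in the triangular and king grids a particle can decide locally whether it is $S$-contractible without any extra communication, so each effective round takes one actual round; in the square grid, one preliminary round is required so that every particle receives its neighbors' sets $N^{+}_{G}(p')\cap S$ and can evaluate the corner conditions, hence each effective round takes two actual rounds. This accounts exactly for the factor of two between the triangular/king bound and the square bound.

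\medskip

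\noindent\textbf{Step 2 (bounding the number of effective rounds).} Let $S_{t}$ denote the value of $S$ at the start of effective round $t$, with $S_{0}=V(P)$. By Lemma~\ref{lem11} applied inductively, $G[S_{t}]$ is hole-free and connected for every $t$. By Lemma~\ref{propexone}, whenever $|S_{t}|>1$ there exists an $S$-contractible particle, which is removed during effective round $t$, so $|S_{t+1}|<|S_{t}|$; this already guarantees termination. To obtain the sharper bound, I introduce the potential $\Phi(H)=r(H)+mtree(H)$ for a hole-free connected subgraph $H$ of $G$, and I plan to prove that $\Phi(G[S_{t+1}])\le\Phi(G[S_{t}])-1$ whenever $|S_{t}|\ge 2$. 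The intuition is that each effective round either peels a non-articulation particle from the border---which shrinks the ``bulk'' by one layer and decreases $r$---or peels a leaf in the forest of articulations sitting on the border (as in the proof of Lemma~\ref{propexone})---which decreases the height of a tree-like appendage and hence decreases $mtree$. Since $\Phi(G[S_{0}])=r(P)+mtree(P)$ and $\Phi$ is a non-negative integer, $|S_{t}|$ reaches $1$ after at most $r(P)+mtree(P)$ effective rounds.

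\medskip

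\noindent\textbf{Step 3 (final election).} When $|S_{t}|=1$, the unique remaining particle $p^{\ast}$ now satisfies $|N_{G}(p^{\ast})\cap S|=0<\Delta(G)$ and $G[M_{G}(p^{\ast})\cap S]=\{p^{\ast}\}$ is trivially connected, hence $p^{\ast}$ is $S$-contractible with no neighbor in $S$ and, by Case~1 of Algorithm~\ref{alg1}, enters state \textbf{L} during the next effective round. Combining the three steps yields $r(P)+mtree(P)+1$ effective rounds, which translates to $r(P)+mtree(P)+1$ actual rounds on $\mathscr{T}$ and $\mathscr{K}$, and $2(r(P)+mtree(P)+1)=2(r(P)+mtree(P))+2$ actual rounds on $\mathscr{S}$.

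\medskip

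\noindent\textbf{Main obstacle.} The crux is Step~2: showing that $\Phi$ strictly decreases during each effective round with $|S_{t}|\ge 2$. The delicate point is that the algorithm may simultaneously peel several particles---some shrinking the bulk, some pruning tree-like appendages---and a careful case analysis is required to guarantee that at least one of the two quantities $r$ and $mtree$ strictly drops. One must in particular argue that peeling non-articulation border particles cannot increase $mtree$ (no new, deeper tree-like appendage is created, which relies on the hole-free invariant of Lemma~\ref{lem11}), and symmetrically that pruning a leaf of the articulation forest does not increase $r$. The hole-free property is essential here, because it forces the border of $G[S_{t}]$ to form a single topological cycle rather than splitting into several components around holes.
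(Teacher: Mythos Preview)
Your Step~1 and Step~3 are fine (with one nitpick: in Step~3, $M_{G}(p^{\ast})$ is the \emph{open} extended neighborhood, so $M_{G}(p^{\ast})\cap S=\emptyset$ rather than $\{p^{\ast}\}$; the conclusion still holds because the empty graph is connected). The factor-of-two reduction via Lemma~\ref{artornot} is exactly how the paper handles the square grid.

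The genuine gap is Step~2. You propose the potential $\Phi=r+mtree$ and claim it strictly decreases every effective round, but you never prove this, and you yourself flag it as the ``main obstacle''. The monotonicity of $mtree$ under passing to induced subgraphs is easy (any induced subtree of $G[S_{t+1}]$ is already an induced subtree of $G[S_{t}]$), so $mtree$ weakly decreases. The trouble is $r$: removing a single non-articulation border particle need not lower $r$ at all, and since $r$ is defined with distances \emph{inside} $G[S_{t}]$, removals can in principle lengthen shortest paths. Your heuristic ``peeling shrinks the bulk by one layer and decreases $r$'' presumes that a full layer of the border is removed in every round, which the asynchronous round model does not guarantee. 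So the strict decrease of $\Phi$ is not established, and the case analysis you outline (border peel does not raise $mtree$; leaf prune does not raise $r$) only gives weak inequalities, not the strict drop you need.

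The paper avoids this entirely by \emph{not} running $r$ and $mtree$ in parallel. Its argument is two-phase: first it shows that after $r(P)+1$ rounds every remaining particle has an empty neighbor, hence $G[S_{r(P)+1}]$ is an induced subtree of $P$ (so its height is at most $mtree(P)$ by definition of $mtree$); second, once $G[S_{t}]$ is a tree, each round removes all its leaves, so the height drops by exactly one per round and the tree collapses within $mtree(P)$ further rounds. This sequential decomposition sidesteps any need to compare or combine $r(G[S_{t}])$ and $mtree(G[S_{t}])$ along the way. If you want to repair your argument, the cleanest route is to abandon the joint potential and reproduce this two-phase structure.
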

\begin{proof}
First, suppose $G$ is isomorphic to $\mathscr{T}$ or $\mathscr{K}$.
Let $S_t$ be the set of particles in state \textbf{C} after the first $t$ rounds.
Note that after $r(P)+1$ rounds we are sure that every remaining particle $u$ satisfies $|N_{G}(u)\cap S_{r(S)+1}|<N_{G}(u)$. 
This is due to the fact that each particle $u$ on the border of $S_i$, for $i\ge 0$, is not in $S_{i+1}$ if $|N_{G}(u)\cap S_{r(S)+1}|<N_{G}(u)$.
Thus, by Lemma~\ref{lem11}, $G[S_{r(P)+1}]$ is either a tree or empty.
By definition, we have $h(G[S_{r(P)+1}])\le mtree(P)$. Note that in the case $G[S_{t}]$ is not a trivial tree (a tree containing only one vertex), we have $h(G[S_{t+1}])=h(G[S_{t}])-1$, for $t\ge r(P)+1$. Therefore, we obtain that $S_{t}$ is empty if $t\ge r(P)+mtree(P)+1$.

Second, suppose $G$ is isomorphic to $\mathscr{S}$.
Note that, by Lemma~\ref{artornot}, one round is sufficient in order that every particle determines if it is $S$-contractible.
Consequently, it is easy to observe that the required number of rounds in order that the $S$-contraction algorithm finishes for $\mathscr{S}$ is bounded by two times the required number of rounds in order that the $S$-contraction algorithm finishes for $\mathscr{T}$ or $\mathscr{K}$. 
$\hfill\qed$
\end{proof}

\section{An example of algorithm in order to construct a spanning tree\label{appendixb}}

Our proposed algorithm (Algorithm~\ref{alg2}) for constructing a spanning tree consists in setting the particle in state \textbf{L} as the root and, afterward, constructing a spanning tree using a classical distributed spanning tree algorithm.

\begin{algorithm}
\caption{A spanning-tree algorithm for a particle $p$.} 
\label{alg2}
\begin{algorithmic} 
\State \textbf{Case 1:} State \textbf{L} (leader).
\State set $\text{child}(p)= N^{+}_{G}(p)$; 
\State send a message $m$ (which only contains the bit $0$) through each port from $\text{child}(p)$.
\State \textbf{Case 2:} State \textbf{N} (not elected).
\If { the particle receives the message $m$ through the port $a$}
    \If{ the particle has never received the message $m$ before}
        \State set $\text{parent}(p)=a$;
        \State set $\text{child}(p)=N_{G}^{+}(p)\setminus\{a_{1},\ldots, a_{\ell}\}$, where $a_1$, $\ldots$, $a_{\ell}$ are ports on which $p$ has received the message $m$;
        \State send the message $m$ through each port from $\text{child}(p)$.
    \EndIf
\EndIf
\end{algorithmic}
\end{algorithm}

\section{Combining the $S$-contraction algorithm with a general leader election algorithm\label{appendixc}}

Daymude et al. \cite{JJD2017} introduced a leader election algorithm that works on every configuration of $P$. In this appendix we present a way to reduce the required number of rounds in order that this algorithm finishes its execution (by using the $S$-contraction algorithm).
In the remaining part of this appendix, the leader election algorithm from \cite{JJD2017} will be called the \emph{general leader election algorithm} (to the best of our knowledge, it is the only leader election algorithm for programmable matter working on every configuration).

In order to simplify the presentation of the results, we only discuss the results for the triangular grid. However, by modifying the algorithms it is possible to make it work for the square and king grids also. We begin this appendix by describing how the general leader election algorithm works. This description will help the reader to convince itself that, in a lot of cases, combining the $S$-contraction algorithm with a general leader election algorithm could be a good idea.

For particle $p$ and a port $a$ of $p$ connected to another particle, we denote by $n(a)$ the port number of the first port of $p$ connected to a particle after $a$ in the clockwise order.
The general leader election algorithm uses the fact that it is possible to send a message around a boundary.
Sending a message around a boundary consists in sending and re-transmitting the message in the following way: for a particle $p$, if a message is received from the port $a$, then the particle re-transmits the message to the particle connected to $p$ by the port $n(a)$ of $p$. Figure~\ref{artfig} represents how the messages are transmitted in this case.

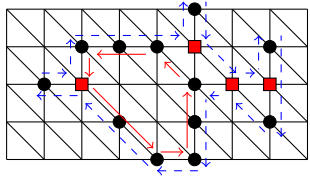
\begin{figure}[t]
\begin{center}
\begin{tikzpicture}[scale=1]

\foreach \x in {2,2.5,3,...,6}
{
\draw (\x+2.5,0) -- (\x+2.5,2);
}
\foreach \y in {0,0.5, ...,2}
{
\draw (2+2.5,\y) -- (6+2.5,\y);
}
\foreach \z in {2,2.5,3,3.5,4}
{
\draw (\z+2.5,2) -- (\z+2.5+2,0);
}
\foreach \h in {0.5,1,1.5}
{
\draw (2+2.5,\h) -- (\h+2.5+2,0);
\draw (\h+4+2.5,2) -- (6+2.5,\h);
}

\node at (5,1) [circle,draw=black,fill=black,scale=0.5]{};
\node at (6,0.5) [circle,draw=black,fill=black,scale=0.5]{};
\node at (5.5,1) [regular polygon, regular polygon sides=4,draw=black,fill=red,scale=0.5]{};
\node at (5.5,1.5) [circle,draw=black,fill=black,scale=0.5]{};
\node at (6,1.5) [circle,draw=black,fill=black,scale=0.5]{};
\node at (6.5,0) [circle,draw=black,fill=black,scale=0.5]{};
\node at (7,0) [circle,draw=black,fill=black,scale=0.5]{};
\node at (7,0.5) [circle,draw=black,fill=black,scale=0.5]{};
\node at (7,1) [circle,draw=black,fill=black,scale=0.5]{};
\node at (6.5,1.5) [circle,draw=black,fill=black,scale=0.5]{};
\node at (7,0.5) [circle,draw=black,fill=black,scale=0.5]{};
\node at (7,1.5) [regular polygon, regular polygon sides=4,draw=black,fill=red,scale=0.5]{};
\node at (7,2) [circle,draw=black,fill=black,scale=0.5]{};
\node at (7.5,1) [regular polygon, regular polygon sides=4,draw=black,fill=red,scale=0.5]{};
\node at (8,1.5) [circle,draw=black,fill=black,scale=0.5]{};
\node at (8,1) [regular polygon, regular polygon sides=4,draw=black,fill=red,scale=0.5]{};
\node at (8,0.5) [circle,draw=black,fill=black,scale=0.5]{};

\draw[<-,color=red] (5.6,1.1) -- (5.6,1.35);
\draw[<-,color=red] (5.7,1.4) -- (6.35,1.4);
\draw[<-,color=red] (6.6,1.3) -- (6.8,1.1);
\draw[<-,color=red] (6.9,0.9) -- (6.9,0.15);
\draw[<-,color=red] (6.85,0.1) -- (6.55,0.1);
\draw[<-,color=red] (6.45,0.15) -- (5.65,0.95);

\draw[<-,dashed,color=blue]  (5.55,0.75)-- (6.4,-0.1);
\draw[<-,dashed,color=blue]  (6.5,-0.15) -- (7.05,-0.15);
\draw[<-,dashed,color=blue]  (7.15,-0.1) -- (7.15,0.8);
\draw[<-,dashed,color=blue]  (7.2,0.85) -- (7.4,0.85);
\draw[<-,dashed,color=blue]  (7.55,0.75) -- (7.95,0.35);
\draw[<-,dashed,color=blue]  (8.15,0.3) -- (8.15,1.65);
\draw[<-,dashed,color=blue]  (7.85,1.65) -- (7.85,1.2);
\draw[<-,dashed,color=blue]  (7.8,1.15) -- (7.6,1.15);
\draw[<-,dashed,color=blue]  (7.55,1.15) -- (7.15,1.55);
\draw[<-,dashed,color=blue]  (7.15,1.6) -- (7.15,2.1);
\draw[<-,dashed,color=blue]  (6.85,2.1) -- (6.85,1.7);
\draw[<-,dashed,color=blue]  (6.8,1.65) -- (5.4,1.65);
\draw[<-,dashed,color=blue]  (5.35,1.6) -- (5.35,1.2);
\draw[<-,dashed,color=blue]  (5.3,1.15) -- (4.9,1.15);
\draw[<-,dashed,color=blue]  (4.9,0.85) -- (5.45,0.85);

\end{tikzpicture}
\end{center}
\caption{The way how the messages are re-transmitted in the algorithm of Daymude et al. \cite{JJD2017} (square: articulation; dashed arrow: message transmitted along the border; simple arrow: message transmitted along the hole).}
\label{artfig}
\end{figure}

For each hole $H$ of the particle graph $P$ on $G$, we denote by $b(H)$, the set of particles which are adjacent with a vertex of $H$. Also, when a particle is adjacent to vertices of different holes or when a particle belongs to the border of $P$, it is possible to decompose particles in agents, each agent corresponding to a different hole or to the border. Thus, an agent will be either adjacent to vertices of at most one hole or belong to the border but never both.

The general leader election algorithm can be summarized as the succession of four phases. A first phase consists in removing the candidacy of each particle having six neighbors. A second phase consists, for each hole $H$ of $P$, to remove the candidacy of some agents of $b(H)$ using a randomized procedure. Simultaneously, the same process is done for agents in the border of $P$. A third phase consists in verifying if there is only one candidate in $b(H)$, for each hole $H$ of $P$ and only one candidate in the border of $P$. They calculate the relative positions of the candidates in order to do such verification. Finally, in the last phase, they verify if the remaining candidates agents are in $b(H)$ or in border of $P$. The leader will be the candidate particle of the border of $P$.
We can verify if an agent is in $b(H)$ by sending a message around the boundary and verifying when the message comes back to the initial particle if this message has been re-transmitted in the clockwise direction or not. 
As Figure~\ref{artfig} illustrates, the messages re-transmitted through the boundary of a hole are re-transmitted in the counterclockwise direction and the messages re-transmitted through the border of $P$ are re-transmitted in the clockwise direction.
In all these phases, the messages are re-transmitted around a boundary.

The required number of rounds in order that the general leader election algorithm finishes its execution is $O(\ell)$, where $\ell$ is the number of particles in the border of $P$.

We do the following remark about the $S$-contraction algorithm that comes from the fact that for any $S$-contractible particle $p$ of $S$, if $G[S]$ is connected, then $G[S\setminus \{s(p)\}]$ is also connected.

\begin{re}
If the particle graph $P$ has a hole then, after the execution of the $S$-contraction algorithm on the particle graph $P$ on $G$ there will remain particles in state \textbf{C}. Also, the graph induced by the particles in state \textbf{C} is connected.
\end{re}
In particular when $P$ has one hole, the remaining particles in state \textbf{C} will form a ring in the triangular grid.
Thus, it is possible to run the $S$-contraction algorithm and, afterward, execute the general leader election algorithm on the remaining particle in state \textbf{C}. Let $S_c$ be the set of particles in state \textbf{C} after the $S$-contraction algorithm on the particle graph $P$ on $G$. Depending on the structure of $P$, it could happen that the number of particles in the border $\mathcal{T}[S_c]$ is smaller than in the border of $P$ and that it speeds the execution of the general leader election algorithm. For example, that is always the case when $P$ has at most one hole.

\section{Coloring the $k^{\text{th}}$ power of graph\label{appendixd}}
The $k^{\text{th}}$ power of a graph $G$ is the graph on the same vertex set than $G$ and with edges connecting every two vertices $u$ and $v$ satisfying $d_{G}(u,v)\le k$. Note that there is a correlation between this definition and the $k^{\text{th}}$ power of the adjacency matrix of $G$ (the adjacency matrix of the  $k^{\text{th}}$ power of $G$ is easily obtained from this matrix).
Our goal in this appendix is to determine an optimal coloring of the $k^{\text{th}}$  power of the square, triangular and king grids.
We use these colorings in order to propose a distributed algorithm (supposing we have a leader) in order to assign $k$-local identifiers to the particles (see Section 4).
A coloring of the $k^{\text{th}}$  power of a grid corresponds to assign a value to each vertex of the graph such that every two vertices with the same assigned value are at distance at least $k+1$.
An example of coloring of the $k^{\text{th}}$ power of the square grid is represented by Figures \ref{motifdistance3} and \ref{motifdistance4}.
In Figure \ref{motifdistance3}, it is easy to notice that every two vertices with color 0 (or any other color) are at distance at least $4$.

More formally, a \emph{$k$-coloring} of a graph $G$ is a map $c$ from $V(G)$ to $\{0,1,\ldots,k-1\}$ which satisfies $c(u)\neq c(v)$ for every $uv\in E(G)$.
The \emph{chromatic number} $\chi(G)$ of $G$, is the smallest integer $k$ such that there exists a $k$-coloring of $G$.
The \emph{$k^{\text{th}}$  power} $G^k$ of a graph $G$ is the graph obtained from $G$ by adding an edge between every two vertices satisfying $d_{G}(u,v)\le k$. 
More details about the coloring of the $k^{\text{th}}$  power of graphs can be found in the survey from Kramer and Kramer~\cite{KR2008}. The results presented in this appendix are inspired by the previous works \cite{FE2003,JJ2005,SE2001} about the coloring of the $k^{\text{th}}$  power of the grids.

\subsection{Coloring the $k^{\text{th}}$  power of square grids}
We give the following result from Fertin et al.~\cite{FE2003}.
\begin{theo}[\cite{FE2003}]
For any $k\ge 1$, $\chi(\mathscr{S}^{k})= \lceil (k+1)^2/2 \rceil$.
\end{theo}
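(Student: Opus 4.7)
My plan is to use a change of coordinates that turns the $L_1$-metric $d_{\mathscr{S}}$ into an $L_\infty$-metric, where the structure of distance-$k$ balls becomes easier to analyze. Consider the map $\phi: (i,j) \mapsto (u,v) := (i+j, i-j)$, which is a bijection from $\mathbb{Z}^2$ onto the sublattice $L = \{(u,v) \in \mathbb{Z}^2 : u + v \equiv 0 \pmod{2}\}$ and sends $d_{\mathscr{S}}$ to the $L_\infty$-distance. Thus $\chi(\mathscr{S}^k)$ equals the chromatic number of the graph on $L$ whose edges connect points at $L_\infty$-distance at most $k$.

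For the lower bound, I would exhibit a clique of size $m_k := \lceil (k+1)^2/2 \rceil$ by taking $\phi^{-1}(C)$ where $C = L \cap \{0, 1, \ldots, k\}^2$. All points in the $(k+1) \times (k+1)$ integer box $\{0, \ldots, k\}^2$ are pairwise at $L_\infty$-distance at most $k$, so $\phi^{-1}(C)$ is a clique in $\mathscr{S}^k$. Counting $|C|$ by the parity of $k$ yields $|C| = m_k$: for $k$ odd, $|C| = 2((k+1)/2)^2 = (k+1)^2/2$; for $k$ even, $|C| = ((k+2)/2)^2 + (k/2)^2 = (k^2+2k+2)/2$.

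For the upper bound, I would exhibit a sublattice $\Lambda \subseteq L$ of index $m_k$ in $L$ whose nonzero vectors all have $L_\infty$-norm at least $k+1$. Coloring each $(u,v) \in L$ by its coset in $L/\Lambda$ then yields a proper $m_k$-coloring. For $k$ odd, I would take $\Lambda = \langle (k+1, 0), (0, k+1) \rangle$: both generators lie in $L$ (since $k+1$ is even), the index in $\mathbb{Z}^2$ equals $(k+1)^2$ and so halves to $m_k$ in $L$, and every nonzero element has a coordinate that is a nonzero multiple of $k+1$. For $k$ even, I would take $\Lambda = \langle (k+1, 1), (-1, k+1) \rangle$: both generators lie in $L$ since $k+2$ and $k$ are even, and the index in $\mathbb{Z}^2$ equals the determinant $(k+1)^2 + 1 = 2m_k$, giving $[L:\Lambda] = m_k$.

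The main obstacle is verifying the min-norm condition in the $k$-even case, since the lattice is no longer axis-aligned. I would argue by contradiction: suppose some nonzero $(x, y) = a(k+1, 1) + b(-1, k+1) = (a(k+1) - b,\ a + b(k+1))$ satisfies $\max(|x|, |y|) \leq k$. Inverting the linear system yields $(k+1) x + y = a((k+1)^2 + 1)$ and $-x + (k+1) y = b((k+1)^2 + 1)$. Since $|(k+1) x + y| \leq (k+1) k + k < (k+1)^2 + 1$, and likewise for the other combination, divisibility forces $(k+1) x + y = 0$ and $-x + (k+1) y = 0$; this linear system has nonzero determinant $(k+1)^2 + 1$, so its only solution is $x = y = 0$, and hence $a = b = 0$, a contradiction.
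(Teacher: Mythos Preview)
Your argument is correct. Note, however, that the paper does not itself prove this theorem: it is quoted from Fertin, Godard and Raspaud~\cite{FE2003}, and the only additional information the paper records is the explicit coloring formula $c((i,j)) = (i+kj) \pmod{m_k}$ used downstream in Algorithm~\ref{alg4}.

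Your route is genuinely different from that formula-based description. By rotating to $L_\infty$ coordinates via $\phi(i,j)=(i+j,i-j)$, you make the lower bound transparent (the distance-$k$ ball becomes an axis-aligned square, so the clique is just a parity count in a $(k{+}1)\times(k{+}1)$ box), and for the upper bound you color by cosets of a carefully chosen sublattice $\Lambda\subseteq L$ of index $m_k$ whose minimum $L_\infty$-norm is at least $k{+}1$. The Fertin et al.\ construction instead colors directly in the original coordinates by the linear form $i+kj$ modulo $m_k$; the two colorings are not the same (for $k$ even the underlying sublattices differ), but both are optimal. Your approach has the advantage of treating the lower and upper bounds in a uniform geometric language, and the min-norm verification you give for the $k$-even lattice (via the inverse linear system and the bound $(k{+}1)k+k<(k{+}1)^2+1$) is clean and complete. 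The trade-off is that the paper's cited formula is what actually feeds into the $k$-local identifier algorithm, so if you were supplying a proof for use in this paper you would still want to mention that $c((i,j))=(i+kj)\bmod m_k$ is one concrete optimal coloring.
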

Let $m_{k}=\lceil (k+1)^2/2 \rceil$.
In their paper, Fertin et al. define an optimal coloring $c$ of the $k^{\text{th}}$  power of the square grid as follows: $c((i,j))= (i+kj) \pmod{m_{k}}$.
In Figures \ref{motifdistance3} and \ref{motifdistance4}, we represent patterns for coloring the $3th$ and $4th$ powers of the square grid. These patterns have been obtained using the coloring from~\cite{FE2003}.
Note that since there is a pattern, a vertex $(i,j)$ can determine its color only knowing $i\pmod{m_{k}}$ and $j\pmod{m_{k}}$.
We recall the definition of the following function $f^k_{\mathscr{S}}(i,j)= (i+kj) \pmod{m_{k} }$. Note that $f^k_{\mathscr{S}}(i,j)=f^k_{\mathscr{S}}(i',j')$, in the case $i\equiv i'\pmod{m_{k}}$ and  $j\equiv j'\pmod{m_{k}}$
This function is used in order to assign $k$-local identifiers to particles.
\begin{figure}
\begin{center}
\begin{tikzpicture}[scale=0.7]
\node at (0,3) {$0$};
\node at (1,3) {$1$};
\node at (2,3) {$2$};
\node at (3,3) {$3$};
\node at (4,3) {$4$};
\node at (5,3) {$5$};
\node at (6,3) {$6$};
\node at (7,3) {$7$};
\node at (0,2) {$3$};
\node at (1,2) {$4$};
\node at (2,2) {$5$};
\node at (3,2) {$6$};
\node at (4,2) {$7$};
\node at (5,2) {$0$};
\node at (6,2) {$1$};
\node at (7,2) {$2$};
\node at (0,1) {$6$};
\node at (1,1) {$7$};
\node at (2,1) {$0$};
\node at (3,1) {$1$};
\node at (4,1) {$2$};
\node at (5,1) {$3$};
\node at (6,1) {$4$};
\node at (7,1) {$5$};
\node at (0,0) {$1$};
\node at (1,0) {$2$};
\node at (2,0) {$3$};
\node at (3,0) {$4$};
\node at (4,0) {$5$};
\node at (5,0) {$6$};
\node at (6,0) {$7$};
\node at (7,0) {$0$};
\node at (0,-1) {$4$};
\node at (1,-1) {$5$};
\node at (2,-1) {$6$};
\node at (3,-1) {$7$};
\node at (4,-1) {$0$};
\node at (5,-1) {$1$};
\node at (6,-1) {$2$};
\node at (7,-1) {$3$};
\node at (0,-2) {$7$};
\node at (1,-2) {$0$};
\node at (2,-2) {$1$};
\node at (3,-2) {$2$};
\node at (4,-2) {$3$};
\node at (5,-2) {$4$};
\node at (6,-2) {$5$};
\node at (7,-2) {$6$};
\node at (0,-3) {$2$};
\node at (1,-3) {$3$};
\node at (2,-3) {$4$};
\node at (3,-3) {$5$};
\node at (4,-3) {$6$};
\node at (5,-3) {$7$};
\node at (6,-3) {$0$};
\node at (7,-3) {$1$};
\node at (0,-4) {$5$};
\node at (1,-4) {$6$};
\node at (2,-4) {$7$};
\node at (3,-4) {$0$};
\node at (4,-4) {$1$};
\node at (5,-4) {$2$};
\node at (6,-4) {$3$};
\node at (7,-4) {$4$};
\end{tikzpicture}
\end{center}
\caption{A pattern for coloring the $3^{\text{th}}$ power of the square grid.}
\label{motifdistance3}
\end{figure}
\begin{figure}
\begin{center}
\begin{tikzpicture}[scale=0.65]
\node at (0,12) {$0$};
\node at (1,12) {$1$};
\node at (2,12) {$2$};
\node at (3,12) {$3$};
\node at (4,12) {$4$};
\node at (5,12) {$5$};
\node at (6,12) {$6$};
\node at (7,12) {$7$};
\node at (8,12) {$8$};
\node at (9,12) {$9$};
\node at (10,12) {$10$};
\node at (11,12) {$11$};
\node at (12,12) {$12$};
\node at (0,11) {$5$};
\node at (1,11) {$6$};
\node at (2,11) {$7$};
\node at (3,11) {$8$};
\node at (4,11) {$9$};
\node at (5,11) {$10$};
\node at (6,11) {$11$};
\node at (7,11) {$12$};
\node at (8,11) {$0$};
\node at (9,11) {$1$};
\node at (10,11) {$2$};
\node at (11,11) {$3$};
\node at (12,11) {$4$};
\node at (0,10) {$10$};
\node at (1,10) {$11$};
\node at (2,10) {$12$};
\node at (3,10) {$0$};
\node at (4,10) {$1$};
\node at (5,10) {$2$};
\node at (6,10) {$3$};
\node at (7,10) {$4$};
\node at (8,10) {$5$};
\node at (9,10) {$6$};
\node at (10,10) {$7$};
\node at (11,10) {$8$};
\node at (12,10) {$9$};
\node at (0,9) {$2$};
\node at (1,9) {$3$};
\node at (2,9) {$4$};
\node at (3,9) {$5$};
\node at (4,9) {$6$};
\node at (5,9) {$7$};
\node at (6,9) {$8$};
\node at (7,9) {$9$};
\node at (8,9) {$10$};
\node at (9,9) {$11$};
\node at (10,9) {$12$};
\node at (11,9) {$0$};
\node at (12,9) {$1$};
\node at (0,8) {$7$};
\node at (1,8) {$8$};
\node at (2,8) {$9$};
\node at (3,8) {$10$};
\node at (4,8) {$11$};
\node at (5,8) {$12$};
\node at (6,8) {$0$};
\node at (7,8) {$1$};
\node at (8,8) {$2$};
\node at (9,8) {$3$};
\node at (10,8) {$4$};
\node at (11,8) {$5$};
\node at (12,8) {$6$};
\node at (0,7) {$12$};
\node at (1,7) {$0$};
\node at (2,7) {$1$};
\node at (3,7) {$2$};
\node at (4,7) {$3$};
\node at (5,7) {$4$};
\node at (6,7) {$5$};
\node at (7,7) {$6$};
\node at (8,7) {$7$};
\node at (9,7) {$8$};
\node at (10,7) {$9$};
\node at (11,7) {$10$};
\node at (12,7) {$11$};
\node at (0,6) {$4$};
\node at (1,6) {$5$};
\node at (2,6) {$6$};
\node at (3,6) {$7$};
\node at (4,6) {$8$};
\node at (5,6) {$9$};
\node at (6,6) {$10$};
\node at (7,6) {$11$};
\node at (8,6) {$12$};
\node at (9,6) {$0$};
\node at (10,6) {$1$};
\node at (11,6) {$2$};
\node at (12,6) {$3$};
\node at (0,5) {$9$};
\node at (1,5) {$10$};
\node at (2,5) {$11$};
\node at (3,5) {$12$};
\node at (4,5) {$0$};
\node at (5,5) {$1$};
\node at (6,5) {$2$};
\node at (7,5) {$3$};
\node at (8,5) {$4$};
\node at (9,5) {$5$};
\node at (10,5) {$6$};
\node at (11,5) {$7$};
\node at (12,5) {$8$};
\node at (0,4) {$1$};
\node at (1,4) {$2$};
\node at (2,4) {$3$};
\node at (3,4) {$4$};
\node at (4,4) {$5$};
\node at (5,4) {$6$};
\node at (6,4) {$7$};
\node at (7,4) {$8$};
\node at (8,4) {$9$};
\node at (9,4) {$10$};
\node at (10,4) {$11$};
\node at (11,4) {$12$};
\node at (12,4) {$0$};
\node at (0,3) {$6$};
\node at (1,3) {$7$};
\node at (2,3) {$8$};
\node at (3,3) {$9$};
\node at (4,3) {$10$};
\node at (5,3) {$11$};
\node at (6,3) {$12$};
\node at (7,3) {$0$};
\node at (8,3) {$1$};
\node at (9,3) {$2$};
\node at (10,3) {$3$};
\node at (11,3) {$4$};
\node at (12,3) {$5$};
\node at (0,2) {$11$};
\node at (1,2) {$12$};
\node at (2,2) {$0$};
\node at (3,2) {$1$};
\node at (4,2) {$2$};
\node at (5,2) {$3$};
\node at (6,2) {$4$};
\node at (7,2) {$5$};
\node at (8,2) {$6$};
\node at (9,2) {$7$};
\node at (10,2) {$8$};
\node at (11,2) {$9$};
\node at (12,2) {$10$};
\node at (0,1) {$3$};
\node at (1,1) {$4$};
\node at (2,1) {$5$};
\node at (3,1) {$6$};
\node at (4,1) {$7$};
\node at (5,1) {$8$};
\node at (6,1) {$9$};
\node at (7,1) {$10$};
\node at (8,1) {$11$};
\node at (9,1) {$12$};
\node at (10,1) {$0$};
\node at (11,1) {$1$};
\node at (12,1) {$2$};
\node at (0,0) {$8$};
\node at (1,0) {$9$};
\node at (2,0) {$10$};
\node at (3,0) {$11$};
\node at (4,0) {$12$};
\node at (5,0) {$0$};
\node at (6,0) {$1$};
\node at (7,0) {$2$};
\node at (8,0) {$3$};
\node at (9,0) {$4$};
\node at (10,0) {$5$};
\node at (11,0) {$6$};
\node at (12,0) {$7$};
\end{tikzpicture}
\end{center}
\caption{A pattern for coloring the $4^{\text{th}}$ power of the square grid.}
\label{motifdistance4}
\end{figure}

\subsection{Coloring the $k^{\text{th}}$  power of triangular  grids}
The chromatic number of the $k^{\text{th}}$  power of the triangular grid has been determined by Sevcikova~\cite{SE2001}.

\begin{theo}[\cite{SE2001}]
For any $k\ge 1$, $\chi({\mathscr{T}}^{k})= \lceil 3(k+1)^2/4 \rceil$.
\end{theo}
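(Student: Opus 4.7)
The plan is to establish equality by matching upper and lower bounds. Write $m'_k = \lceil 3(k+1)^2/4 \rceil$.

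\medskip
\noindent\textbf{Upper bound.} I would verify directly that the coloring $f^k_{\mathscr{T}}$ defined in Section~4 uses exactly $m'_k$ colors and is proper on $\mathscr{T}^k$. Since the map $f^k_{\mathscr{T}}$ factors through reduction of $(i,j)$ modulo an appropriate sublattice $\Lambda \subseteq \mathbb{Z}^2$ (determined by the formula, case-split by the parity of $k$), it suffices to check that the shortest non-zero vector of $\Lambda$ has $\mathscr{T}$-distance at least $k+1$. Using the explicit distance formula given in Section~2, this reduces to inspecting a bounded number of lattice generators near the origin. For $k$ even the single relation $i + (3k/2+1)j \equiv 0 \pmod{m'_k}$ determines $\Lambda$, and the two shortest generating vectors are easily read off and checked to have $d_{\mathscr{T}}$-norm $\ge k+1$. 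For $k$ odd the formula involves an extra shift $\lfloor 2j/(k+1)\rfloor (k+1)/2$ that reflects the fact that $m'_k$ is no longer a divisor of a simple linear expression; handling this correctly is the delicate part of the construction, and I would handle it by passing to the refined lattice generated by the vectors $(3(k+1)/2, 0)$ and the ``shifted'' period in the $j$-direction, then again verifying minimum distance by inspection.

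\medskip
\noindent\textbf{Lower bound.} I would show $\chi(\mathscr{T}^k) \ge m'_k$ by exhibiting a clique in $\mathscr{T}^k$ of size $m'_k$, i.e., a set of $m'_k$ vertices pairwise at $\mathscr{T}$-distance at most $k$. For $k$ even, the closed ball $B_{k/2}(v)$ of $\mathscr{T}$-radius $k/2$ around any vertex $v$ works: it contains $1 + 3(k/2)(k/2+1) = 3k^2/4 + 3k/2 + 1 = m'_k$ vertices (the triangular grid has $6r$ vertices on the sphere of radius $r$), and any two points of $B_{k/2}(v)$ are at distance at most $k$ by the triangle inequality. For $k$ odd, the ball of radius $(k-1)/2$ contains only $1 + 3(k-1)(k+1)/4$ vertices, i.e., $(3k+1)/2$ fewer than $m'_k$; I would enlarge it by adjoining an appropriate arc of $(3k+3)/2$ vertices on the sphere of radius $(k+1)/2$ --- specifically, the vertices lying on two adjacent sides of the hexagonal sphere, chosen so that any two selected boundary vertices, and any boundary vertex together with any interior vertex, are still within $\mathscr{T}$-distance $k$. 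A direct computation with the distance formula confirms the count and the pairwise-distance condition.

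\medskip
\noindent\textbf{Main obstacle.} The routine part is the even $k$ case on both sides. The hard part is the odd $k$ case: on the coloring side because the definition of $f^k_{\mathscr{T}}$ contains a floor correction that must be proved to preserve properness across the ``seam'' where $\lfloor 2j/(k+1)\rfloor$ jumps, and on the clique side because one must choose the extra $(3k+3)/2$ vertices on the sphere of radius $(k+1)/2$ so that all pairwise distances remain $\le k$ --- not every half-sphere works, and the selection must respect the asymmetry in the $\mathscr{T}$-distance formula between the two diagonal directions.
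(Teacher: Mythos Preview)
The paper does not actually prove this theorem: it is quoted verbatim from \v{S}ev\v{c}\'{\i}kov\'a~\cite{SE2001} and stated without argument, after which the paper merely records the formula $f^k_{\mathscr{T}}$ (without verifying that it is a proper coloring of $\mathscr{T}^k$) for later use in Algorithm~\ref{alg4}. There is therefore no paper proof to compare against; your outline is a genuine proof sketch where the paper offers none.

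That said, two remarks on the outline itself. First, your strategy---explicit periodic coloring for the upper bound, ball-plus-arc clique for the lower bound---is exactly the standard route for these grid results (it is how the analogous theorems for $\mathscr{S}^k$ in~\cite{FE2003} and for the hexagonal lattice in~\cite{JJ2005} are proved, and how~\cite{SE2001} proceeds as well), so the plan is sound. Second, there is an arithmetic slip in the odd-$k$ lower bound: you correctly compute the deficit $m'_k - \bigl(1+3(k-1)(k+1)/4\bigr) = (3k+1)/2$, but then propose to adjoin an arc of $(3k+3)/2$ vertices. Adding that many would overshoot $m'_k$ and, since the upper bound is tight, no such clique can exist; the arc must contain exactly $(3k+1)/2$ vertices. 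This does not affect the method, only the bookkeeping when you select which portion of the sphere of radius $(k+1)/2$ to adjoin.
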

Let $m'_{k}=\lceil 3(k+1)^2/4 \rceil$. We recall the definition of the following function:
$$ f^k_{\mathscr{T}}(i,j)= \left\{
    \begin{array}{ll}
        (i_{\pmod{3(k+1)/2}}+j(3(k+1)/2)+ \\
        \lfloor 2 j/(k+1) \rfloor (k+1)/2)) \pmod{m'_{k}}  & \mbox{if } k \mbox{ is odd}; \\
        (i+(3k/2+1) j) \pmod{m'_{k}}  & \mbox{otherwise.} \\
    \end{array}
\right.
$$
Note that $f^k_{\mathscr{T}}(i,j)=f^k_{\mathscr{T}}(i',j')$, in the case $i\equiv i'\pmod{m'_{k}}$ and  $j\equiv j'\pmod{m'_{k}}$.
This function is used in order to assign $k$-local identifiers to particles.

\subsection{Coloring the $k^{\text{th}}$  power of king grid}
To our knowledge, the chromatic number of the king grid has not been determined yet. However, in contrast with the triangular grid, the chromatic number of the $k^{\text{th}}$ power of the king grid is easy to determine. In this subsection, we determine the exact value of the chromatic number of the $k^{\text{th}}$  power of the king grid.

\begin{theo}
We have $\chi({\mathscr{K}}^{k})= (k+1)^2$.
\end{theo}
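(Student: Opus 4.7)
The plan is to prove the statement by matching upper and lower bounds, both of which are very short once one unpacks the definition of $\mathscr{K}^k$. Recall that $d_{\mathscr{K}}((i,j),(i',j'))=\max(|i-i'|,|j-j'|)$, so two vertices are adjacent in $\mathscr{K}^k$ exactly when their coordinates differ by at most $k$ in each component; equivalently, when they both lie in some common $(k+1)\times(k+1)$ axis-aligned block.

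For the lower bound, I would exhibit a clique of size $(k+1)^2$. Take the block $B=\{(i,j)\mid 0\le i,j\le k\}$. Any two vertices of $B$ differ by at most $k$ in each coordinate, so their $L^\infty$-distance is at most $k$, hence they are pairwise adjacent in $\mathscr{K}^k$. Thus $B$ is a clique, and $\chi(\mathscr{K}^k)\ge \omega(\mathscr{K}^k)\ge |B|=(k+1)^2$.

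For the upper bound, I would exhibit an explicit proper coloring with $(k+1)^2$ colors, namely
\[
c(i,j)=(i \bmod (k+1))+(k+1)\cdot(j \bmod (k+1)),
\]
which takes values in $\{0,1,\ldots,(k+1)^2-1\}$. Suppose $c(i,j)=c(i',j')$. Looking at the residue modulo $k+1$, one forces $i\equiv i'\pmod{k+1}$ and then $j\equiv j'\pmod{k+1}$. Hence either $(i,j)=(i',j')$, or $|i-i'|\ge k+1$, or $|j-j'|\ge k+1$; in the latter two cases $d_{\mathscr{K}}((i,j),(i',j'))\ge k+1$, so the two vertices are not adjacent in $\mathscr{K}^k$. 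This shows $c$ is proper, giving $\chi(\mathscr{K}^k)\le (k+1)^2$.

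Combining the two bounds yields the desired equality. There is no real obstacle here: the $L^\infty$ geometry of the king grid makes both halves essentially immediate, in contrast with the square and triangular cases where the clique number is strictly smaller than the chromatic number and the optimal colorings are genuinely clever. The only care needed is verifying that the formula $c$ above is exactly the function $f^k_{\mathscr{K}}$ used in Algorithm~\ref{alg4}, so that the coloring pattern invoked there is indeed optimal.
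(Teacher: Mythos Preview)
Your proposal is correct and essentially identical to the paper's own proof: the same $(k+1)\times(k+1)$ block is used as a clique for the lower bound, and the same coloring $c(i,j)=(i\bmod(k+1))+(k+1)(j\bmod(k+1))$ is used for the upper bound. If anything, your justification that same-colored vertices are at distance at least $k+1$ is spelled out in slightly more detail than the paper's one-line assertion.
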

\begin{proof}
Let $\mathscr{K}_{k}$ be the subgraph of $\mathscr{K}$ induced by the vertices $\{(i,j)\in V(\mathscr{K})|\ 0\le i\le k,\ 0\le j \le k\}$. 
Note that $diam(\mathscr{K}_{k})=k$ and that $|V(\mathscr{K}_{k})|=(k+1)^{2}$. Thus, since each vertex of $\mathscr{K}_{k}$ must be colored differently in a coloring of the $k^{\text{th}}$  power of the king grid, we obtain that $\chi({\mathscr{K}}^{k})\ge (k+1)^2$.
We define the coloring function $c((i,j))= i_{\pmod{k+1}}+(k+1)j_{\pmod{k+1}}$. Note that we have $d_{\mathscr{K}}(u,v)\ge k+1$, for every two vertices $u$ and $v$ with the same color in $\mathscr{K}$.
Therefore,  we obtain that $\chi({\mathscr{K}}^{k})= (k+1)^2$.
$\hfill\qed$
\end{proof}
Note that since there is a pattern, a vertex $(i,j)$ can determine its color only knowing $i\pmod{(k+1)}$ and $j\pmod{(k+1) }$.
We recall the definition of the following function $f^k_{\mathscr{K}}(i,j)= i_{\pmod{k+1}} +(k+1)j_{\pmod{k+1}}$. Note that $f^k_{\mathscr{K}}(i,j)=f^k_{\mathscr{K}}(i',j')$, in the case $i\equiv i'\pmod{k+1}$ and  $j\equiv j'\pmod{k+1}$.
This function is used in order to assign $k$-local identifiers to particles.
\end{subappendices}
\end{document}